\journal{Journal of Computational Physics}
\theoremstyle{definition}
\newtheorem{mydef}{Definition}
\theoremstyle{plain}
\newtheorem{mylem}{Theorem}
\begin{document}
\begin{frontmatter}

\title{Parallel family trees for transfer matrices in the Potts model}


\author[uchile,cecs]{Cristobal A. Navarro\corref{cor1}}
\ead{crinavar@dcc.uchile.cl}
\author[cecs]{Fabrizio Canfora}
\author[uchile]{Nancy Hitschfeld}
\author[uchile]{Gonzalo Navarro}

\address[uchile]{Department of Computer Science, Universidad de Chile, Santiago, Chile.}
\address[cecs]{Centro de Estudios Científicos (CECs), Valdivia, Chile.}

\cortext[cor1]{Corresponding author}

\begin{abstract}
The computational cost of transfer matrix methods for the Potts model is related to the question 
\textit{into how many ways can two layers of a lattice be connected?}. 
Answering the question leads to the generation of a combinatorial set of lattice configurations. This set defines the \textit{configuration space} of the problem, 
and the smaller it is, the faster the transfer matrix can be computed.
The configuration space of generic $(q,v)$ transfer matrix 
methods for strips is in the order of the Catalan numbers, which grows asymptotically as $O(4^m)$ 
where $m$ is the width of the strip.
Other transfer matrix methods with a smaller configuration space indeed exist but they make assumptions 
on the temperature, number of spin states, or restrict the structure of the lattice. 
In this paper we propose a parallel algorithm that uses a sub-Catalan configuration space of $O(3^m)$
to build the generic $(q,v)$ transfer matrix in a compressed form.
The improvement is achieved by grouping the original set of Catalan configurations into a forest 
of family trees, in such a way that the solution to the problem is now computed by 
solving the root node of each family. As a result, the algorithm becomes exponentially faster 
than the Catalan approach while still highly parallel. 
The resulting matrix is stored in a compressed form using $O(3^m \times 4^m)$ of space, making 
numerical evaluation and decompression to be faster than evaluating the matrix in its 
$O(4^m \times 4^m)$ uncompressed form.
Experimental results for different sizes of strip lattices show that the \textit{parallel family trees (PFT)} strategy 
indeed runs exponentially faster than the \textit{Catalan Parallel Method (CPM)}, 
especially when dealing with dense transfer matrices. 
In terms of parallel performance, we report strong-scaling speedups of up to $5.7X$ when running on an 
8-core shared memory machine and $28X$ for a 32-core cluster.
The best balance of speedup and efficiency for the multi-core machine was achieved when using $p=4$ processors, 
while for the cluster scenario it was in the range $p \in [8,10]$. 
Because of the parallel capabilities of the algorithm, a large-scale execution of the parallel family 
trees strategy in a supercomputer could contribute to the study of wider strip lattices. 
\end{abstract}

\begin{keyword}
Potts Model \sep Deletion Contraction \sep Parallel Computing \sep Transfer Matrix \sep Strip lattices


\end{keyword}
\end{frontmatter}

\section{Introduction}
The Potts model \cite{potts} has been widely used to study physical phenomena of 
\textit{spin lattices} such as phase transitions \cite{PhysRevLett.43.799} in the 
thermodynamical equilibrium. Lattices such as square, triangular, honeycomb and kagome 
are of high interest and are being studied frequently 
\cite{Chang_Shrock_2000, Shrock_Tsai_1999, Chang_Salas_Shrock_2002, Chang_Jacobsen_Salas_Shrock_2002}. 
When the number of possible spin states is set to $q=2$, the Potts model becomes the classic 
Ising model \cite{ising_1925}, which was solved by Onsager \cite{NYAS:NYAS627} for 
the infinite-volume limit on a torus. 
For higher values of $q$ the problem becomes much harder and no solution has been found yet.
Nevertheless, it is of interest to study the problem in the form of a strip lattice. 
Hopefully, the study of sufficiently wide strips could contribute at understanding 
the physical properties of such complex systems under different boundary conditions.

An effective technique for obtaining the partition function of \textit{strip lattices} 
is to compute its transfer matrix, denoted $M$. The transfer matrix technique allows the study of 
strips that repeat their lattice structure along one of its dimensions. 
$M$ can be computed symbolically or numerically (fully or partial) 
evaluated on $(q,v)$. When there is enough disk space, we find that it is more convenient 
to compute $M$ using polynomials on $(q,v)$. 
Indeed, computing $M$ with general $(q,v)$ has an impact on performance and memory, 
but it gives the advantage that $M$ will not have to be re-computed many times when doing 
numerical sweeps for $q$ and $v$. Another advantage is that from the general $(q,v)$ transfer 
matrix one can generate many partially evaluated instances of the transfer matrix that can be used later for numerical sweeps on the remaining parameter. 
For limited computational resources, generating $M$ partially or fully evaluated is a practical choice. 

If the strip lattice represents an infinite band, then analysis can be performed by computing 
the eigenvalues of $M$.
If the strip lattice is finite, then a initial condition vector $\vec{Z_1}$ is needed. In that 
case, boundary conditions have to be specified. Typical boundary conditions are free, periodic, cylindrical and cyclic.
$M$ and $\vec{Z_1}$ together form a partition function vector $\vec{Z}$ based on the following recursion:
\begin{equation}
	\vec{Z}(n)=M\vec{Z}(n-1) = \vec{Z} = M^{n-1}\vec{Z_1}
	\label{eq_Z}
\end{equation}
Computing the powers of $M^{n-1}$ is done in a numerical context, otherwise memory usage would become intractable. 
When $M^{n-1}$ is computed, the first element of $\vec{Z}$ becomes the partition function of the strip lattice.

This work focuses on the process of building $M$, which is an \textit{NP-hard} 
problem \cite{Woeginger:2003:EAN:885909.885927} where exponential cost 
algorithms are involved in the process, with the width $m$ as the exponent. 
There are different approaches for building $M$:
(1) In the \textit{spin representation} approach, an integer value is chosen for $q$ 
and the transfer matrix $T$ is obtained by combining the different spin configurations in the graph 
layer. Under this approach, the size of $M$ becomes $q^{|V|} \times q^{|V|}$, where $|V|$ is the number 
of spins in the layer of the strip. A more detailed explanation on the spin representation 
approach is available in the first of the six works by Salas, Sokal and Jacobsen series of papers \cite{salas_sokal_1}.
(2) One can also obtain $M$ as a product of sparse matrices of asymptotic size $O(4^m)$ \cite{1751-8121-43-31-315002}, 
one per edge and practically linear in the number of edges, where $M$ is not constructed explicitly 
but only its action on a given vector of states.
(3) Alternatively one can compute $M$ with a generic $(q,v)$ method where 
the configuration space grows proportional to the Catalan numbers \cite{chang2009structure} 
or asymptotically as $O(4^m)$, leading to a matrix of size $O(4^m \times 4^m)$. 
Indeed there are other strategies that can achieve smaller transfer matrices \cite{salas_sokal_2011, 
MGhaemi, bedini}, but they assume special properties for the lattice, work only for finite 
graphs or need to fix the values of $v$ and/or $q$ in order to take any advantage. 
We believe it is worth studying what are the possibilities for algorithmic improvements 
in the generic $(q,v)$ Catalan based approach since it is a general method applicable to any planar strip. 

In the light of these aspects just mentioned, we ask \textit{\textbf{question 1:} 
Is there a generic $(q,v)$ method 
that can compute the transfer matrix for any planar strip lattice, using a sub-Catalan 
configuration space?}.
From our research we have found that: \textit{a hierarchical symmetry exists among elements of the configuration space that define the transfer matrix}. 
This symmetry is revealed when first applying deletion-contraction to certain edges of the strip layer. If this symmetry is used so that the 
configuration space is re-organized as a forest of hierarchical families, then a parallel computation only on the root nodes is sufficient for generating a compressed transfer matrix. 
When exploiting this symmetry, the configuration space is reduced from $O(4^m)$ to $O(3^m)$, which is an improvement to the actual bound on general transfer matrix methods 
for strips. This result allows us to answer positively to \textit{question 1}.

With the evolution of computer architectures towards a higher amount of cores \cite{blake2009survey, duncan1990survey}, parallel computing is not anymore limited to clusters or super-computing; 
workstations can also provide high performance for solving physical problems \cite{navhitmat2014}. 
It is in this last category where most of the scientific community lies, therefore parallel implementations for multi-core machines 
are the ones to have the largest impact on the community. Considering how technology is changing, we ask \textit{\textbf{question 2:} Can transfer matrix methods work in parallel for modern multi-core 
architectures and scale their performance efficiently as more processors are used?}.
Given the amount of data-parallelism on the number of root nodes, the performance of the algorithm scales efficiently as more processors are used. 
Results on a multi-core 8-core machine show a speedup of $5.7X$ is achieved when using $p=8$ 
processors, and an efficiency of $95\%$ is achieved when using $p = 4$. Results on a 32-core cluster confirm that the implementation can scale in a distributed scenario, 
achieving a speedup of $28X$ when using $p=32$ processors and an efficiency of over $90\%$ for the full range $p \in [1,32]$ when dealing with large square strips. 
We can also confirm that a compressed transfer matrix not only saves data space in comparison to the original one, but it is also faster to load 
considering that it must be first evaluated for any practical usage. In the case of cluster performance, a dynamic scheduler is mandatory in order to bypass 
potential \textit{performance valleys} that are caused by the combination of unbalanced work and a static scheduler. 
Again, this result allows a positive answer for \textit{question 2}.

The paper is organized as follows: Section \ref{seq_preliminaries_related_work} covers preliminary concepts of the Potts model, 
Section \ref{sec_relatedwork} describes related work. Sections \ref{seq_algorithm_overview} and \ref{seq_algorithm_optimizations} explain the 
algorithm and the additional optimizations. Section \ref{sec_implementation} provides details about the implementation while 
in section \ref{sec_performance} we present detailed results for running time, speedup, efficiency and knee, using different amount of processors. We also compare performance 
against the \textit{Catalan Parallel Method (CPM)} \cite{DBLP:conf/hpcc/NavarroHC13}. Section \ref{sec_validation} is devoted to the 
validation of the algorithm by computing some physical results; from limiting curves to energy and specific heat, and comparing them to the results 
obtained by other authors. Section \ref{seq_conclusions} discusses our main results and concludes the impact of our work.

\section{Preliminaries}
\label{seq_preliminaries_related_work}
Let $G=(V,E)$ be a lattice with $|V|$ vertices, $|E|$ edges and $s_i$ be the state of a \textit{spin} of $G$ with 
$s_i \in [1..q]$ and $i \in [1,|V|]$. The partition function $Z(G, q, \beta)$ is defined as
\begin{equation}
Z(G, q, \beta) = \sum_{r}e^{-{\beta}h(G_{r})}
\label{eq_potts}
\end{equation}
where $\beta = \frac{1}{K_B T}$, $K_B$ is the Boltzmann constant, $T$ the temperature and $h(G_r)$ is the energy of the lattice at a given state 
$G_r$\footnote{A state $G_r$ is a distribution of spin values on the lattice. It can be seen the a graph $G$ with a specific 
combination of spin values on the vertices.}.
The Potts model \cite{potts} defines the energy of a state $G_r$ with the following Hamiltonian:
\begin{equation}
\label{eq_hamiltonian}
h(G_r) = -J\sum_{\langle i,j\rangle \in G_r}\delta_{s_{i}, s_{j}}
\end{equation}
Where $\langle i,j\rangle$ corresponds to the nearest neighbor edge from vertex $v_i$ to $v_j$, $r \in [1..q^{|V|}]$, $J$ is the interaction energy ($J<0$ for 
\textit{anti-ferromagnetic} and $J>0$ for \textit{ferromagnetic}) and $\delta_{s_{i}, s_{j}}$ corresponds to the \textit{Kronecker delta} evaluated at the pair of spins $\langle i, j \rangle$ with states $s_i,s_j$ and expressed as
\begin{equation}
\delta_{s_{i},s_{j}} = 
\left\{ 
\begin{array}{rl}
 1 &\mbox{ if $s_{i}=s_{j}$} \\
 0 &\mbox{ if $s_{i}\not=s_{j}$}
\end{array} 
\right.
\end{equation}
As the lattice becomes larger in the number of vertices and edges, the computation of equation (\ref{eq_potts})
becomes rapidly intractable with an exponential cost of $\Theta(q^{|V|})$. In practice, one can use equivalent methods that, while still exponential, in practice run 
faster than the original definition.

The \textit{deletion-contraction} method \cite{Wilf:2002:AC:560438}, or DC method, was initially used to compute the Tutte polynomial \cite{tutte} and 
was then extended to the Potts model after a relation of duality was found between the two (see \cite{potts_tutte_relation, sokal_2005}). 
DC re-defines $Z(..)$ as the following recursive equation:
\begin{equation}
Z(G, q, v) = Z(G-e, q, v) + vZ(G/e, q, v)
\label{eq_deletion_contraction}
\end{equation}
Where $G-e$ is the \textit{deletion} operation, $G/e$ is the \textit{contraction} operation and 
the auxiliary variable $v = e^{-\beta J} - 1$ makes $Z(..)$ a polynomial.
There are three special cases where DC can perform a recursive step with linear cost:
\begin{equation}
Z(G, q, v)=
\left\{ 
\begin{array}{ll}
(q+v)Z(G/e, q, v) ;		&\mbox{if \{e\} is a spike.}\\
(1+v)Z(G-e, q, v) ;		&\mbox{if \{e\} is a loop.}\\
q^{|V|};			&\mbox{if $E=\{\emptyset\}$.}
\end{array} 
\right.
\label{eq_deletion_contraction_optimizations}
\end{equation}
The computational complexity of DC has a direct upper bound of $O(2^{|E|})$.
When $|E| >> |V|$ a tighter bound is known based on the 
Fibonacci sequence complexity \cite{Wilf:2002:AC:560438}; $O((\frac{1+\sqrt{5}}{2})^{|V|+|E|})$. 
In general, the time complexity of DC can be written as
\begin{equation}
T(G) = min\Bigg( O(2^{|E|}), O\Big(\frac{1+\sqrt{5}}{2}\Big)^{|V|+|E|}\Bigg)
\end{equation}

A \textit{strip lattice} is a bidimensional graph $G=(V,E)$ that repeats its pattern at least along one dimension. It can be built as the concatenation of layers $K_1, K_2, ..., K_n$ sharing their boundary vertices and edges. Figure \ref{fig_strip_lattice_general} illustrates how the notion of 
strip lattice applies to the case of the square and kagome lattices.  
\begin{figure}[ht!]
\includegraphics[scale=0.625]{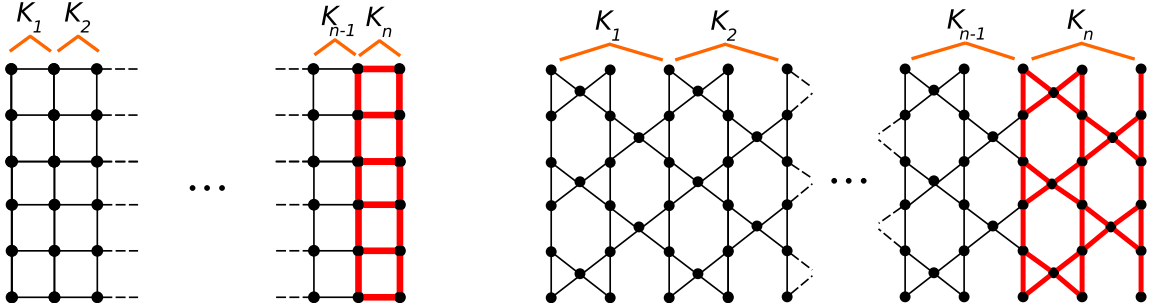}
\centering
\caption{The strip structure for the square and kagome lattices, both with a width (vertical) of $m=6$.}
\label{fig_strip_lattice_general}
\end{figure}
The transfer matrix, denoted $M$, takes advantage of the repeating nature of the lattice, allowing the study of very long graphs.
In the limit of infinite length the free energy per site becomes:
\begin{equation}
f = \frac{1}{n_{K}} ln \lambda_{+}
\label{eq_freeenergy_density}
\end{equation}
where $n_{K}$ is the number of non-shared vertices per layer and $\lambda_+$ is the dominant eigenvalue 
of $M$ with nontrivial coefficient associated.
The dimension of $M$ grows proportional to a combinatorial function $\Gamma(m)$, which depends on the size of the base (\textit{i.e.}, the width of $G(V,E)$) 
and it represents \textit{the different ways in which two layers can connect} by combining spin states and identifications. The set of configurations generated by the base corresponds to the \textit{configuration space} of the problem. 
The computational cost of a transfer matrix method comes from two sources; (1) the size of the configuration space and (2) the cost of the local algorithm. 
The sequence generated by $\Gamma(m)$ corresponds to the size of the configuration space of the problem and, as mentioned 
earlier, it defines the size of $M$. The local algorithm is in charge of computing the partition functions for each element of the configuration space.

\section{Related Works}
\label{sec_relatedwork}
The transfer matrix methods were introduced by Derrida \textit{et. al.} in 1980 \cite{derrida1980transfer} as an approach to study percolation and phenomenological re-normalization. In 
1982, Baxter used transfer matrix techniques in his seminal works as a tool for solving statistical mechanics problems \cite{baxter1982exactly}.
Salas, Sokal and Jacobsen have greatly contributed with a series of results, plus 
an additional unnumbered one that follows the same line, in which they study the physics of square 
and triangular strip lattices through the transfer matrix technique 
\cite{salas_sokal_1, Jacobsen2001701, jacobsen2003, jacobsen2006, 1179.82040, salas_sokal_2011, 
jacobsen2007}. In those works, the authors use different types of algorithmic optimizations for 
the construction of $M$ based on the symmetries available. Different scenarios are considered 
along the works, such as the zero temperature (chromatic polynomial) case, ferromagnetic and 
antiferromagnetic cases, and different boundary conditions such as free, periodic, cylindrical 
and a special boundary condition that consists of adding two extra vertices on the sides of the 
strip. Some of the contributions made in these works include the use of non-nearest neighbors 
partitions for $v=-1$, sparse matrix factorization, algebraic input from the
representation of the Temperley-Lieb algebra, symmetries for different boundary conditions 
and the computation of the limiting curves or partition function zeroes for the different 
boundary conditions up to $m \le 13$.
State of the art works on the square lattice normally study strips in the range $3 \le m \le 13$.  
For the case of the square lattice with free boundary conditions, Salas \textit{et. al.} 
achieved $m = 12$ using $v=-1$ \cite{1179.82040}. It should be noted 
that if $v \not= -1$ and free boundary conditions are used, then the configuration space is 
the one proportional to the Catalan numbers and the problem becomes computationally harder 
to handle.
The problem of the matrix size has also been improved by algebraic techniques \cite{MGhaemi} in 
the spin representation, reducing the matrix size when working with $q=2$ and $q=3$. The authors studied 
the square and triangular strips with layers of up to $r=11$ spins, which is equivalent to a square strip 
of width $m\approx5$.
Jacobsen \textit{et. al.} have studied the $q$-state Potts model for $q = 4 cos^2(\pi/p)$ being a 
Beraha number with $p > 2$ and integer \cite{jacobsen2006}. In the work, the authors study strips 
of widths in the range $m \in [2, 6]$. The relevance of their work is that they manage to compute 
the partition function using the RSOS representation.
\'Alvarez \textit{et. al.} \cite{Alvarez_Canfora_Reyes_Riquelme_2012} have reported exact 
results for the kagome strip of width $m=5$ using the generic $(q,v)$ Catalan based transfer matrix technique.
In contrast to these related works, we are interested in exploring a general $(q,v)$ method that can allow the study of strips in the state of the art range for free boundary conditions using generic $(q,v)$. For simplicity, we will restrict our physical results just to the computation and validation of the limiting curves using free boundary conditions in order to stay within the scope of our work, but not restrict the proposed strategy to these conditions. 


More general methods for computing the exact partition function of a lattice have also been 
proposed \cite{hartmann-2005-94, bedini, Shrock_2000}. 
Bedini \textit{et. al.} \cite{bedini} proposed a transfer matrix method for computing 
the partition function of arbitrary graphs using a tree-decomposed transfer matrix technique. 
For arbitrary graphs, they mean any type of finite graph; \textit{i.e.}, random or regular 
planar/non-planar graphs.
In their work, the authors obtain a sub-exponential complexity when processing random planar 
graphs. Their algorithm is considered the best so far for arbitrary graphs and the authors manage 
to achieve results for regular lattices of up to $18 \times 18$ sites. 
If the tree-decomposed transfer matrix method is applied to a strip, the configuration space 
to explore becomes the same as the traditional transfer matrix methods for strips, \textit{i.e.}, 
the tree-width becomes the width of the strip and the cost is proportional to the Catalan 
number of the tree-width. The work is closely related to another result by Jacobsen in which 
large regular lattices of up to $20 \times 21$ sites
were studied \cite{1751-8121-43-31-315002} by using a sparse transfer matrix method based on the 
product of sparse matrices, of dimension $~3^m$ for $v=-1$ and $~4^m$ for $v\not=-1$. 
The work of Haggard \textit{et. al.} \cite{haggard_computing_tutte_polynomials} is considered 
to have the best implementation of a deletion-contraction technique for the computation of the 
Tutte polynomial for any arbitrary graph (the Tutte polynomial is the dual of the partition function \cite{potts_tutte_relation}). Their algorithm reduces the computation tree in the presence of loops, multi-edges, cycles and biconnected graphs (as one-step reductions). By using a cache, 
some computations can be reused (\textit{i.e.}, sub-graphs that are isomorphic to the ones stored in the cache do not need to be computed again). 
An alternative algorithm to Haggard \textit{et. al.} was proposed by Bj{\"o}rklund \textit{et. al.} \cite{DBLP:conf/focs/BjorklundHKK08} which achieves exponential time only in the number of vertices;
$O(2^nn^{O(1)})$ with $n=|V|$. Asymptotically their method is better than deletion-contraction
considering that many interesting lattices have more edges than vertices. 
However, Haggard \textit{et. al.} \cite{haggard_computing_tutte_polynomials} have stated 
that the memory usage of Bj{\"o}rklund's method is too high for practical use.
These techniques, which are more general than the ones from the beginning of this section, cannot be directly compared against the classic transfer matrix approach, nevertheless they still needed to be mentioned as part of the related work background. General techniques compute the transfer matrix efficiently for arbitrary graphs, but do not take advantage of the regular graph structure when it is available. On the other hand, classic transfer matrix methods for strips indeed take advantage of the regular graph structure but for arbitrary graphs are not so efficient because for each layer there is a new non-sparse transfer matrix to be computed. Both strategies play an important role in the study of spin lattices. In our case, we focus on strips with regular graph structure, therefore our approach should be considered as a classic transfer matrix method.

Research on transfer matrices for strip lattices 
in the Potts model have not reported experimental results on the parallel performance, 
except for a prior work of the authors \cite{DBLP:conf/hpcc/NavarroHC13} that consists of a parallel method 
for computing general $(q,v)$ transfer matrices using the Catalan approach, which will be named 
the \textit{Catalan Parallel Method (CPM)} for the ease of referencing it later on. 
The CPM method was successfully used to study new widths of the kagome strip \cite{Alvarez_Canfora_Reyes_Riquelme_2012} with generic $(q,v)$. 
The present work is a substantial improvement from CPM.

\section{Algorithm overview}
\label{seq_algorithm_overview}
\subsection{Data structure}
The definition of $G$ from Section \ref{seq_preliminaries_related_work} (see Figure \ref{fig_strip_lattice_general}) 
will be used in this section to explain the input data structure needed by the algorithm. Since the graph is a strip lattice, only layer $K_n$ of the graph $G$ is explicitly needed. 
The following naming scheme is now introduced for distinguishing two types of boundary vertices in the layer: \textit{shared vertices} and \textit{external vertices}.
For convention, \textit{shared vertices} are indexed top-down from $0$ to $m-1$ and correspond to the left-most ones of $K_n$, which are being shared with layer $K_{n-1}$. 
\textit{External vertices} are the right-most ones of $K_n$ and are indexed bottom-up 
from $|V|-m$ to $|V|-1$. Figure \ref{fig_data_structure} illustrates the data structure for an square strip of $m=3$.
\begin{center}
	\includegraphics[scale=0.5]{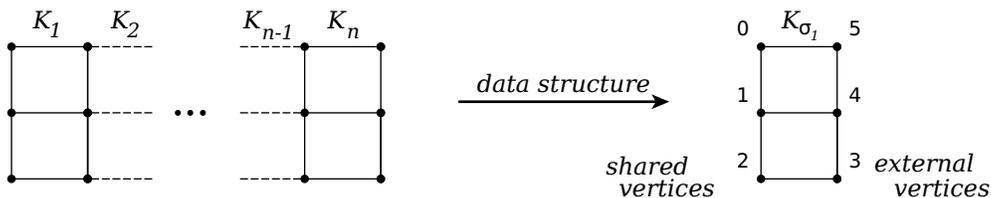}
	\captionof{figure}{Example data structure for a square lattice of width $m=3$.}
	\label{fig_data_structure}
\end{center}
\subsection{DC-based transfer matrix computation}
When using $(q,v)$ polynomials, the configuration space of generic $q$ transfer matrix methods turns out to be the set of all \textit{non-crossing partitions} on a sequence of $m$ serially 
connected vertices. The size of this configuration space is defined by the Catalan numbers:
\begin{equation}
\Gamma(m) = C_m=\frac{1}{m+1}\binom{2m}{m}=\frac{(2m)!}{(m+1)!m!}=\prod_{k=2}^m\frac{m+k}{k}
\end{equation}
We will first explain how the transfer matrix can be built from partial DC repetitions and then proceed to the \textit{parallel family trees} strategy. 

At this point we introduce two terminologies that are important for the rest of the section; \textit{initial configurations} and 
\textit{terminal configurations}. These configurations define a combinatorial sequence of identifications\footnote{For identification we mean a pair of vertices that actually 
represent a single vertex (they are identified). 
Graphically, it is represented by a crossed curved connecting the pair of vertices.} on the \textit{external} and \textit{shared vertices} of layer $K_n$. 
\textit{Initial configurations}, denoted $\sigma_i$ with $i \in [0..C_m-1]$, define a combinatorial sequence of identifications just on the \textit{external vertices} of $K_n$. The 
\textit{terminal configurations}, denoted $\varphi_j$ with $j \in [0..C_m-1]$, define a combinatorial sequence of identifications just on the \textit{shared vertices} of $K_n$. Initial 
configurations generate terminal ones, through the DC method.

The case of ${\sigma_1}$ is the basic case and matches $K_n$. That is, $\sigma_1$ 
is the \textit{initial configuration} where no identifications are applied to the \textit{external vertices} of $K_n$. 
It is equivalent as saying that $\sigma_1$ is the empty partition of the Catalan set. 
Similarly, $\varphi_1$ corresponds to the base case where no \textit{shared vertices} are identified. In other words, $\varphi_1$ is the empty configuration for the Catalan 
set on the \textit{shared vertices} of $K_n$. 
For illustration, Figure \ref{fig_configuration_space} shows the configuration space for the square lattice of width $m=3$:
\begin{center}
	\includegraphics[scale=0.5]{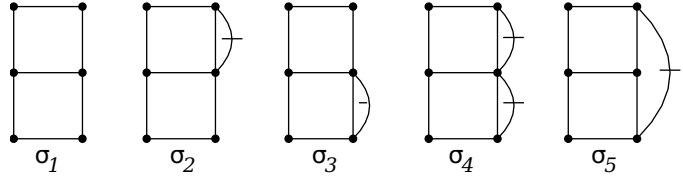}
	\captionof{figure}{The configuration space for a square lattice of width $m=3$.}
	\label{fig_configuration_space}
\end{center}
In order to compute the transfer matrix $M$ (row by row), one must apply $C_m$ 
partial DCs, each time to a different \textit{initial configuration} ${\sigma_i}$. 
Each one of the $C_m$ partial DC applications generates a row of $M$ in the form of partial partition functions on $(q,v)$, distributed into a maximum of $C_m$ \textit{terminal configurations}. 
By \textit{partial DC} we mean to perform DC on the layer, with the corresponding \textit{initial configuration} $\sigma_i$ applied, but stopping the recursion branches 
whenever they meet and edge that connects two \textit{shared vertices}. The 
stop condition on the recursion branches is needed otherwise one would be processing vertices 
and edges of the next layer of the strip, breaking the idea of a transfer matrix. 
For the example of Figure \ref{fig_data_structure} with $m=3$, the partial DC is applied to 
$\sigma_1$, $\sigma_2$, $\sigma_3$, $\sigma_4$ and $\sigma_5$ from Figure \ref{fig_configuration_space}. 

An example of a partial DC for the example of $m=3$ is illustrated in Figure \ref{fig_example_recdc} for the case when computing the first row. 
The process is analogous for the other four rows of $M$ (\textit{i.e.,} $\sigma_2$, $\sigma_3$, $\sigma_4$ and $\sigma_5$).
\begin{center}
	\includegraphics[scale=0.4]{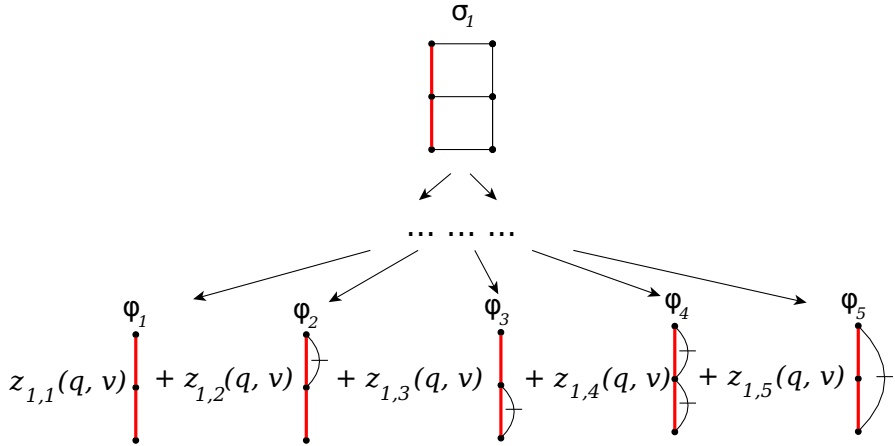}
	\captionof{figure}{Terminal configurations generated from a partial DC on a square strip of width $m=3$.}
	\label{fig_example_recdc}
\end{center}
Once a recursion branch has been stopped, partial partition functions $z_{i,j}(q,v)$ appear associated to remanents of the graph layer. 
Remanents are parts of the graph layer that cannot be computed (\textit{i.e}, edges connecting \textit{shared vertices}) and they match one of the $C_m$ possible \textit{terminal configurations} 
that can exist. For some \textit{initial configurations}, not all \textit{terminal configurations} may be generated from a single DC, but only a subset of them. 

A \textit{terminal configuration} $\varphi_j$ contains a unique sequence of planar identifications on the \textit{shared vertices} that is useful to differentiate one from another. 
We use the term \textit{key} to denote such sequences since they allow fast search and modification in a hash table. 
Proper construction of \textit{keys} are achieved by using a simple algebra 
that defines how multiple identifications on \textit{shared vertices} are combined. A key of $n$ identifications is denoted as 
$\Pi = \pi_{x_1,y_1} + \pi_{x_2, y_2} + ... + \pi_{x_n,y_n}$. The following properties hold true for keys:
\begin{align}
\pi_{a,b} & = \pi_{b,a}
\label{eq_algebra_0}\\
\pi_{a,b}+\pi_{c,d} & = \pi_{c,d} + \pi_{a,b}
\label{eq_algebra_1}\\
\pi_{a,b}+\pi_{b,c} & = \pi_{a,b,c}
\label{eq_algebra_2}
\end{align}
Properties (\ref{eq_algebra_0}) and (\ref{eq_algebra_1}) allow the application of a lexicographical order on the keys, 
while property (\ref{eq_algebra_2}) allows to combine them using transitivity. 
There are important differences when comparing this algebra to the partition algebras studied by Halverson and Ram \cite{halverson2005partition}, specially because the former is much simpler and defines operations on a single layer of points, while the latter defines a different set of operations for a partition monoid that is represented as a graph of two layers of points. Nevertheless, we can still find a relation with the number of partitions in the case of the planar sub-monoid $P_k$, which is $C_{2k}$ for two layers of length $k$, and the number of \textit{keys} for a single layer of length $m$, which is $C_m$.

Using Stirling's approximation, we have that $C_m \approx \frac{4^m}{m^{3/2}\sqrt{\pi}}$, which is consistent with the upper bound:
\begin{equation}
C_m = \frac{1}{m+1} {2m \choose m} \le {2m \choose m} \le 4^m
\label{eq_catalan_upperbound}
\end{equation}
Dutton and Brigham proved in 1986 that the Stirling approximation of the Catalan numbers is in fact already a valid upper bound \cite{Dutton:1986:CEB:10987.10992}. 
In addition, they obtain tighter lower and upper bounds for the Catalan numbers.
The cost of the DC-based transfer matrix method is the product of the cost of the partial 
DC and the size of the configuration space $C_m$. 

So far, the worst case running time of 
the algorithm for computing $M$ is:
\begin{equation}
T(G(V,E), m) = O\Big(\Gamma(m) \cdot DC(K_n))\Big) = O\Big(4^m\cdot min\Big(2^{|E'|}, \frac{1+\sqrt{5}}{2}^{|V'|+|E'|}\Big)\Big)
\label{eq_transfer_matrix_complexity}
\end{equation}

In the following sub-section, we show how a finer analysis 
can lead to a smaller configuration space of $\Gamma(m) = O(3^m)$ for computing a compressed transfer 
matrix $M$.

\subsection{Family trees strategy}
It is possible to reduce the Catalan configuration space by exploiting a symmetry present in 
the \textit{deletion-contraction} (DC) method, resulting in an exponentially faster 
algorithm. Basically, the idea is the following: \textit{if the DC procedure is forced 
to act first on certain external edges of the layer, 
and act later on the rest of the graph, then symmetries appear between nodes of the recursion 
tree and other initial configurations}. 
Exploiting such symmetry allows one to group many Catalan configurations into families of configurations, 
where a single DC procedure applied to the root node of a family contributes to the solution of the whole family.

Forcing DC to start on the external edges results in a recursion tree composed of two phases; 
(1) a perfect binary tree (PBT) of height $h=m-1-b$ and (2) several sub-trees $t_j$ with $j \in [1..2^{h}]$ (see Figure \ref{fig_dc_pbt}).
\begin{center}
	\includegraphics[scale=0.8]{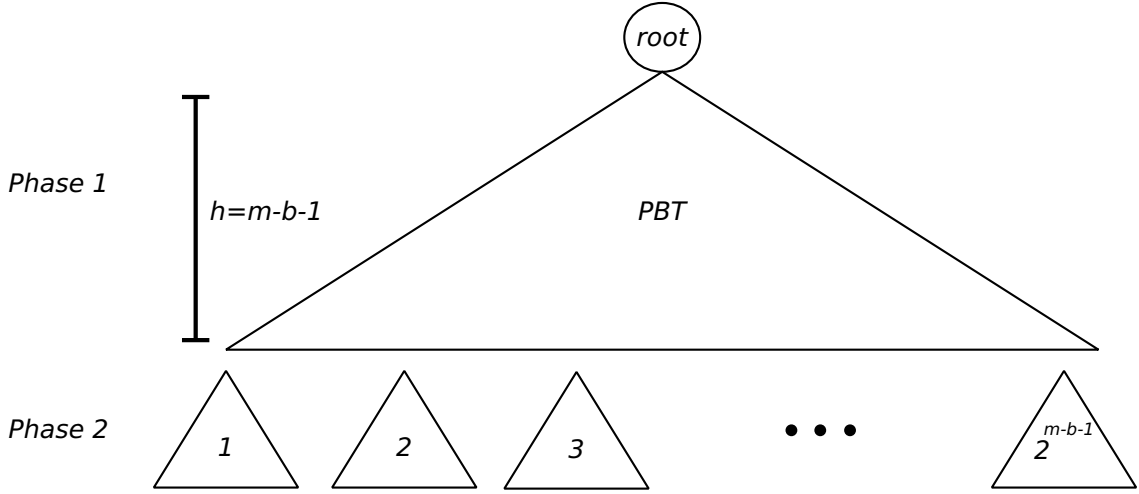}
	\captionof{figure}{When DC is forced to start on the external edges, the recursion is divided into two phases.}
	\label{fig_dc_pbt}
\end{center} 
Variable $b$ is the number of external edges that sit in between an identification $\pi_{ij}$ where at 
least one of its vertices is $i$ or $j$. These $b$ edges are left for phase (2) 
because they do not produce the symmetries needed for the \textit{family trees strategy}. 
Each node of the PBT of phase (1) that comes from a contraction produces 
a unique algebraic symmetry to one of the configurations found in the original Catalan set. 
The configuration of a contracted node from the recursion tree is denoted $\chi_i$ and the symmetric 
correspondence is $\chi_i \longleftrightarrow \sigma_i$.
All $\chi_i$ configurations that share the same PBT, together form a \textit{family tree}. 
Following the example of the square strip with $m=3$, its configuration space would be grouped into two \textit{family trees} (see Figure \ref{fig_example_family}); $\{\sigma_1, \sigma_2, \sigma_3, \sigma_4\}$ and $\{\sigma_5\}$, being 
$\sigma_1$ and $\sigma_5$ their root configurations, respectively.

\begin{center}
	\includegraphics[scale=0.4]{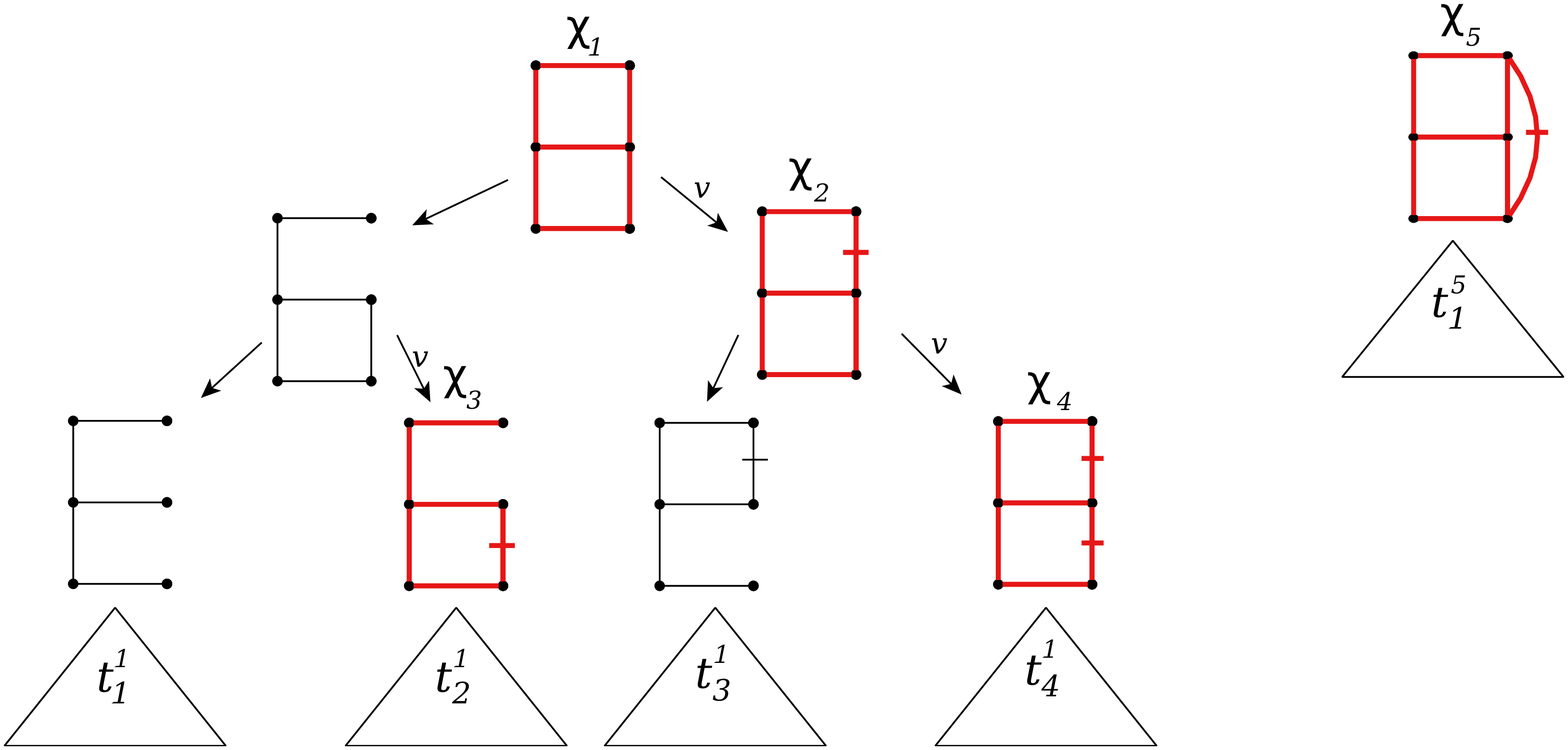}
	\captionof{figure}{An example of the perfect binary tree and subtrees for $m=3$.}
	\label{fig_example_family}
\end{center}

The solution of a configuration, namely $\langle {\sigma_i} \rangle$, is defined in terms of its symmetric $\chi_i$ found in the PBT:
\begin{equation}
\langle {\sigma_i} \rangle = (1+v)^c \sum_{k=0}^{2^d - 1} {v^{b(k)} \langle \chi_{i}^k} \rangle
\label{eq_correspondence}
\end{equation}
Variable $d$ denotes the number of deletions (\textit{i.e.}, holes in the external layer) and variable $c$ the 
contractions accumulated along its path, both starting from the root. 
The $(1+v)^c$ coefficient corresponds to the expression for the $c$ loops that are present in the external 
layer of $\sigma_i$, but are missing in $\chi_i$. For the example of the square strip of width $m=3$, $c=0,1,1,2,0$ for 
$\chi_1, \chi_2, \chi_3, \chi_4, \chi_5$, respectively.
Function $b(k)$ counts the number of non-zero bits of $k$ and the expression $\chi_{i}^k$ is the application 
of the binary mask $k$ just on the holes of $\chi_{i}$. The mask works as follows: if bit $k_j = 1$, 
with $j \in [0..d-1]$, then the $j$-th hole is filled with an edge, otherwise it is left as a hole.
 
When $d=0$, $\chi_i$ represents exactly the starting point 
of an eventual solution $\langle \sigma_i \rangle$, algebraically symmetric in $(1+v)^c$. When $d > 0$, 
$\chi_i$ is no longer the starting point of $\langle \sigma_i \rangle$, 
but instead it is the left-most node in an eventual recursion tree of the solution $\langle \sigma_i \rangle$, 
at level $d$. In order to compute $\langle \sigma_i \rangle$, $2^{d} - 1$ 
variations of $\chi_i$ are needed to build the missing steps and eventually reach $\sigma_i$ in a bottom-up way. 
An important property of the variations of $\chi_i$ is that they actually correspond to other family 
members within the PBT that will be eventually solved too. This means that there is no need to compute these 
variations, instead one has to make the correct relations between the 
different family members. We propose a hash map of the type $(\chi_i, r[])$ so that for each $\chi_i$, represented by its unique key, there is an array of related configurations $r[]$ that 
need $\langle \chi_i \rangle$. Each time a contracted configuration is reached in the PBT, 
equation (\ref{eq_correspondence}) is applied and 
$2^d - 1$ relations are inserted in the hash map. 
Figure \ref{fig_example_family_x3} illustrates the example 
of the strip of width $m=3$ when processing $\chi_3$; it needs $\chi_4$ 
in order to build the solution $\langle \sigma_3 \rangle$. 
\begin{center}
	\includegraphics[scale=0.4]{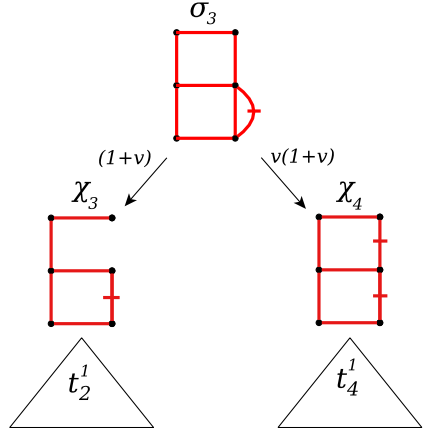}
	\captionof{figure}{An example of how $\chi_3$, with $d=1$, builds the solution of $\sigma_3$ with the help of $\chi_4$.}
	\label{fig_example_family_x3}
\end{center}

The solution for each family member $\langle \chi_{i} \rangle$ can be written in terms of the solutions of the $2^h$ subtrees. 
A convenient way for storing the solution for a whole family is to write a system of equations, using a linear combination of the $2^h$ sub-trees. A 
$v^c$ coefficient is included, where $c$ is the amount of contractions found in the path from the familiar 
to the sub-tree. For the example of the strip of $m=3$, the solution for the family of $\sigma_1$ is:
\begin{alignat}{2}
\langle\sigma_1\rangle &= \langle\chi_1\rangle & &= \langle t_1^1 \rangle + v\langle t_2^1 \rangle + v\langle t_3^1 \rangle + v^2\langle t_4^1 \rangle,\\
\langle\sigma_2\rangle &= (1+v)\langle\chi_2\rangle & &= (1+v)[\langle t_3^1 \rangle + v\langle t_4^1 \rangle]\\
\langle\sigma_3\rangle &= (1+v)[\langle\chi_3\rangle + v\langle\chi_4\rangle] & &= (1+v)[\langle t_2^1 \rangle + v\langle t_4^1 \rangle]\\
\langle\sigma_4\rangle &= (1+v)^2\langle\chi_4\rangle & &= (1+v)^2\langle t_4^1 \rangle
\end{alignat}
Note how $\langle\sigma_3\rangle$ includes $\langle \chi_4 \rangle$, as shown in Figure \ref{fig_example_family_x3}. The solution for the family of $\sigma_5$ is:
\begin{alignat}{2}
\langle\sigma_5\rangle &= \langle\chi_5\rangle & &= \langle t_1^5 \rangle
\end{alignat}
These equations, plus the solutions of the sub-trees, conform the compressed transfer matrix for the 
example strip of width $m=3$. It is important to mention that the sub-trees are stored only once and 
the system of equations use indices to the sub-trees.


Given how DC works, identification can only occur on pairs of vertices that are neighbors. This aspect of DC allows us to establish a formal definition for a family. 

\begin{mydef}
A family is a set of configurations in which for any chosen pair $\sigma_i$ and $\sigma_j$ of the set, the difference of their corresponding keys 
$\Pi^i$ and $\Pi^j$ is $\Pi^{i-j} = \pi_{x_1,x_1 + 1} + \pi_{x_2, x_2 + 1} + ... + \pi_{x_n,x_n +1}$.
\end{mydef}
In other words, the difference between $\sigma_i$ and $\sigma_j$ must 
only consist of identifications of length $l = 1$. Configurations that differ at least by one identification of length $l > 1$ belong to a different family. Each family is identified by 
its root configuration, therefore it is important to know which configurations are root and which are not.
\begin{mydef}
A root configuration is an instance of $K_n$ where its key $\Pi = \pi_{x_1,y_1} + \pi_{x_2, y_2} + ... + \pi_{x_n,y_n}$ satisfies $|x_i - y_i| > 1$ for $i \in [1..n]$.
\label{def_root_configurations}
\end{mydef}
That is, a root configuration is one that does not have identifications of length $l = 1$.
The number of root configurations will be denoted $\Delta_m$ as a function of the width $m$. 
We formulate the following expression for $\Delta_m$, based on Definition \ref{def_root_configurations} and using the \textit{inclusion-exclusion} principle:
\begin{equation}
\Delta_m = \sum_{k=0}^{m-1} (-1)^k {{m-1}\choose{k}} C_{m-k}
\label{eq_deltam}
\end{equation}

\begin{mylem}
\label{lemma_upperbound}
The amount of root configurations is upper bounded as $\Delta_{m} = O(3^m)$.
\end{mylem}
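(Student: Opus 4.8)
\emph{Proof proposal.} The plan is to study $\Delta_m$ through its ordinary generating function and read off the exponential growth rate from the location of the dominant singularity. First I would introduce $D(x) = \sum_{m \ge 1}\Delta_m x^m$ and substitute the inclusion--exclusion expression (\ref{eq_deltam}) for $\Delta_m$. Reindexing the inner sum by $n = m-k$ and exchanging the order of summation turns $D(x)$ into
\begin{equation}
D(x) = \sum_{n \ge 1} C_n\, x^n \sum_{k \ge 0} \binom{n+k-1}{k}(-x)^k .
\label{eq_prop_swap}
\end{equation}
The crucial step is to collapse the inner sum with the generalized (negative) binomial series $\sum_{k \ge 0}\binom{n+k-1}{k}(-x)^k = (1+x)^{-n}$, which gives the closed composition
\begin{equation}
D(x) = \sum_{n \ge 1} C_n \left(\frac{x}{1+x}\right)^{\!n} = C\!\left(\frac{x}{1+x}\right) - 1,
\label{eq_prop_gf}
\end{equation}
where $C(u) = \sum_{n \ge 0} C_n u^n = \frac{1-\sqrt{1-4u}}{2u}$ is the Catalan generating function.

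Next I would locate the singularity of $D$. Since $C(u)$ is analytic for $|u| < \tfrac14$ and singular at $u = \tfrac14$, the substitution $u = x/(1+x)$ moves the singularity to the solution of $x/(1+x) = \tfrac14$, that is $x = \tfrac13$; the pole of $x \mapsto x/(1+x)$ at $x=-1$ lies outside the disc $|x| \le \tfrac13$ and is harmless. Making this explicit, the functional equation $C(u) = 1 + uC(u)^2$ together with (\ref{eq_prop_gf}) shows that $W(x) := D(x)+1$ satisfies $x W^2 - (1+x)W + (1+x) = 0$, whence
\begin{equation}
D(x) = \frac{(1+x) - \sqrt{(1+x)(1-3x)}}{2x} - 1 .
\label{eq_prop_closed}
\end{equation}
Thus $x = \tfrac13$ is a square-root branch point at which $D$ remains finite, with $D(\tfrac13) = 1$.

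Finally I would convert analyticity into the bound. Because every $\Delta_m \ge 0$, the series $\sum_{m\ge1}\Delta_m x^m$ has nonnegative terms, so by Abel's theorem (Pringsheim) its radial limit from the left equals the boundary sum; by (\ref{eq_prop_closed}) this limit is finite, namely $\sum_{m \ge 1}\Delta_m (\tfrac13)^m = D(\tfrac13) = 1$. Since each term of a convergent nonnegative series is at most the total, $\Delta_m (\tfrac13)^m \le 1$, i.e. $\Delta_m \le 3^m$, which yields $\Delta_m = O(3^m)$. A sharper statement, $\Delta_m = \Theta(3^m m^{-3/2})$, would follow from singularity analysis of the branch point in (\ref{eq_prop_closed}), but is not needed here. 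I expect the main obstacle to be the middle of the argument: recognizing that the alternating binomial--Catalan sum collapses via the negative-binomial series into the clean composition (\ref{eq_prop_gf}), and then verifying that the induced singularity at $x=\tfrac13$ is an integrable branch point rather than a pole, so that the boundary series converges and the elementary term-by-term bound applies.
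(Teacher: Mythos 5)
Your proof is correct, and it takes a genuinely different route from the paper's. The paper argues in two lines: it substitutes the bound $C_{m-k}\le 4^{m-k}$ of (\ref{eq_catalan_upperbound}) into (\ref{eq_deltam}) and collapses the result with the binomial theorem, $4\sum_{k=0}^{m-1}\binom{m-1}{k}(-1)^k4^{m-1-k}=4\cdot3^{m-1}$. You instead pass to the generating function: the sum interchange and the negative-binomial collapse give $D(x)=C\bigl(x/(1+x)\bigr)-1$, the Catalan quadratic yields the closed form $D(x)=\frac{1-x-\sqrt{(1+x)(1-3x)}}{2x}$, and nonnegativity of the $\Delta_m$ plus finiteness of $D$ at the branch point $x=\tfrac13$ give $\Delta_m\le 3^m$. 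Each approach buys something. The paper's is far shorter, but its key step is delicate as written: in an alternating sum, enlarging the magnitudes of the negatively-signed terms pushes the sum \emph{down}, so termwise substitution of an upper bound does not by itself justify the ``$\le$''; your argument is immune to this, because the sign structure is absorbed into the exact evaluation of the inner alternating sum as $(1+x)^{-n}$, after which every quantity in sight is nonnegative. Your route also delivers more: the closed form is exactly $x$ times the Motzkin generating function, so you re-derive (rather than quote) the identity $\Delta_m=M_{m-1}$ that the paper states just after the theorem; you get the explicit constant $\Delta_m\le 3^m$ from $\sum_{m\ge1}\Delta_m 3^{-m}=1$; and singularity analysis of the branch point would give the sharper $\Theta(3^m m^{-3/2})$ that the paper imports from \cite{salas2002}. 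Two minor points to tighten. First, the map $x\mapsto x/(1+x)$ does not send the disc $|x|<\tfrac13$ into $|u|<\tfrac14$ (take $x$ near $-\tfrac13$), so justify the composition either only on the real segment $(0,\tfrac13)$ --- which is all your Abel/monotone-convergence step needs --- or directly from the closed form, whose only singularities are $x=-1$ and $x=\tfrac13$. Second, the interchange of summations is cleanest stated as an identity of formal power series (each coefficient of $x^m$ is a finite sum), and then transferred to real $x\in(0,\tfrac13]$ through the closed form.
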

\begin{proof}
Using (\ref{eq_catalan_upperbound}) into (\ref{eq_deltam}) leads to the following bound:
\begin{alignat}{2}
	\Delta_m = \sum_{k=0}^{m-1} (-1)^k {m-1 \choose k} C_{m-k} \le \sum_{k=0}^{m-1} {m-1 \choose k} (-1)^k 4^{m-k} 	&= 4 \sum_{k=0}^{m-1} {m-1 \choose k} (-1)^k 4^{m-1-k}\\
\label{eq_delta_paso_binomial}																						&= 4 (4-1)^{m-1}\\
																													&= O(3^m)
	\label{eq_delta_m_4m}
\end{alignat}
Step \ref{eq_delta_paso_binomial} is obtained by using the Binomial formula with $x=4$ and $y=-1$. 
\end{proof}
The number of root configurations $\Delta_m$ corresponds to the number of \textit{non-crossing non-nearest-neighbor partitions} (\textit{nc-nnn}).
The number of \textit{nc-nnn} 
can also be counted with the Motzkin number evaluated at $m-1$; $\Delta_m = M_{m-1}$, where $M_m$ is:
\begin{equation}
M_m = \sum_{j=0}^{\lfloor m/2 \rfloor} {{m}\choose{2j}} C_j
\end{equation}
The asymptotic number of \textit{nc-nnn} partitions has been previously studied by Chang \textit{et. al.} in \cite{salas2002} by using the asymptotic behavior of $M_m$:
\begin{equation}
M_m = \frac{3^{3/2}}{2\sqrt{\pi}\ m^{3/2}}3^m\Big[1+O(m^{-1})\Big]
\end{equation}
Although the asymptotic bound was already obtained in two earlier works 
\cite{salas2002, salas_sokal_2011} in the context of \textit{nc-nnn partitions}, the 
proof of Theorem \ref{lemma_upperbound} still remains interesting as a short and 
alternative way to establish the $O(3^m)$ upper bound coming from an inclusion-exclusion 
formulation that has not considered the Motzkin numbers. 
\subsubsection{Upper bound for relating $k$-hole familiars}
Counting the amount of family relations within a DC procedure allows one to precise an upper bound on the number of accesses made to the hash map. For each DC application, the cost of 
relating family members is defined as:
\begin{equation}
g(h) = \sum_{k=0}^{h-1} c(k,h) r(k)
\label{eq_gfunc}
\end{equation}
Where $r(k) = 2^k - 1$ is the cost of performing the relations for a $k$-hole configuration. Function $c(k,h)$ counts the number of $k$-hole configurations, which is a subset of the 
total number of familiars. Since familiars can only be contracted nodes within the PBT, the size of a family is $2^{h-1}$. A direct upper bound can be computed assuming the worst case for 
$r(k)$:
\begin{equation}
g(h) < (2^m - 1) \sum_{k=0}^{h-1} c(k,h) \le (2^m - 1)2^h < 4^m = O(4^m)
\end{equation}
A tighter upper bound is possible when $c(k,h)$ is analyzed more carefully. The following pattern can be found when counting the number of $k$-hole configurations.
\begin{alignat}{2}
c(0,h) &= h\\
c(1,h) &= 1+2+...+h-1\\
c(2,h) &= (1)+(1+2)+...+(1+2+3...+h-2)\\
c(3,h) &= \big[(1)\big] + \big[(1) + (1+2)\big] + ... + \big[(1) + (1+2) + ... + (1+2+3+ ... +h-3)\big]
\end{alignat}
The recursion for $c(k,h)$ is:
\begin{alignat}{1}
c(k,h) &= \sum_{i=0}^{h-k} c'(k-1, i),\ \ \ 1 \le k \le h-1\ \ \&\ c(0,h) = h\\
c'(k,h) &= \sum_{i=0}^{h} c'(k-1, i), \ \ \ \ 1 \le k \le h-1\ \ \&\  c'(0,h) = h
\end{alignat}

Function $c(k,h)$ is equivalent to counting the number of $k$-faces in a 
regular $(h-1)$-simplex \cite{coxeter1973regular}. A regular $(h-1)$-simplex is a $(h-1)$-dimensional 
polytope that is the convex hull of $h$ vertices in a regular spatial distribution. 
A regular simplex can also be seen as the generalization of the notion of a triangle or a 
tetrahedron, for an arbitrary dimension. A regular $(h-1)$-simplex can be drawn in the plane by placing 
$h$ vertices inscribed in a circle, with all pairs connected (see Figure \ref{fig_simplex}). 
\begin{center}
	\includegraphics[scale=0.7]{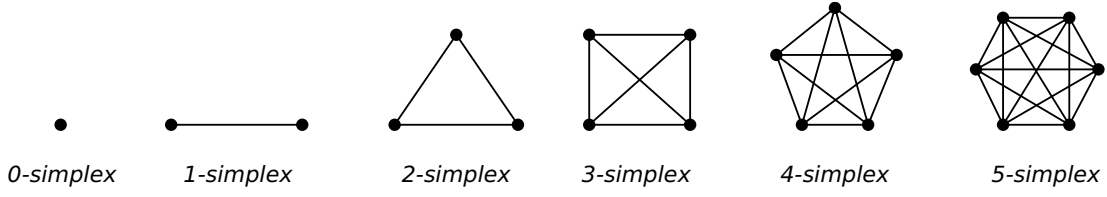}
	\captionof{figure}{Examples of regular simplexes drawn on the plane.}
	\label{fig_simplex}
\end{center}
The number of $k$-faces in a $(h-1)$-simplex \cite{Henk:1997:BPC:285869.285884} is 
defined as:
\begin{equation}
c(k,h) = {{h}\choose{k+1}}
\label{eq_simplex}
\end{equation}
Using (\ref{eq_simplex}) in (\ref{eq_gfunc}), we have that
\begin{equation}
g(h) = \sum_{k=0}^{h-1} {{h}\choose{k+1}} (2^k - 1)
\end{equation}

\begin{mylem}
The cost of relating all configurations within a PBT is upper bounded as $g(m-1) = \frac{1}{6}(3^m - 3\cdot2^m + 3) = O(3^m)$.
\end{mylem}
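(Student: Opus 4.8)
The plan is to evaluate $g(h)=\sum_{k=0}^{h-1}\binom{h}{k+1}(2^{k}-1)$ in closed form for general $h$ and then specialize to $h=m-1$. The whole argument is a direct binomial computation: no combinatorial content beyond the two standard evaluations of $\sum_{j}\binom{h}{j}x^{j}$ is needed, so I would not invoke the simplex interpretation again but simply manipulate the sum.

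First I would reindex with $j=k+1$, so that $j$ runs from $1$ to $h$ and the summand becomes $\binom{h}{j}(2^{\,j-1}-1)$. Splitting the difference produces two sums, $\sum_{j=1}^{h}\binom{h}{j}2^{\,j-1}$ and $\sum_{j=1}^{h}\binom{h}{j}$. For the first I would factor out $\tfrac12$ and complete the range to include $j=0$, writing $\sum_{j=1}^{h}\binom{h}{j}2^{\,j-1}=\tfrac12\bigl(\sum_{j=0}^{h}\binom{h}{j}2^{j}-1\bigr)=\tfrac12(3^{h}-1)$ by the binomial theorem applied to $(1+2)^{h}$. The second is the familiar $\sum_{j=1}^{h}\binom{h}{j}=2^{h}-1$. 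Subtracting the two gives $g(h)=\tfrac12(3^{h}-1)-(2^{h}-1)=\tfrac12\,3^{h}-2^{h}+\tfrac12$.

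Finally I would set $h=m-1$. Using $3^{\,m-1}=3^{m}/3$ and $2^{\,m-1}=2^{m}/2$, the expression becomes $\tfrac{1}{6}3^{m}-\tfrac12\,2^{m}+\tfrac12=\tfrac{1}{6}(3^{m}-3\cdot 2^{m}+3)$, which is exactly the claimed formula. Since the leading term is $3^{m}/6$ while the $2^{m}$ and constant terms are of strictly lower order, $g(m-1)=O(3^{m})$ follows at once, giving the stated bound on the number of hash-map accesses per PBT.

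The only place requiring care — and the closest thing to an obstacle in an otherwise routine derivation — is the boundary bookkeeping around the reindexing: pulling the factor $\tfrac12$ out of $2^{\,j-1}$ and remembering to add back the $j=0$ term of $(1+2)^{h}$ that the original index range $k\ge 0$ (equivalently $j\ge 1$) omits. Handling those two off-by-one adjustments correctly is what makes the closed form collapse cleanly; everything else is mechanical.
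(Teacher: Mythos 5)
Your proof is correct and takes essentially the same route as the paper: split $g(m-1)$ into two binomial sums, complete the index range to absorb the off-by-one, and evaluate each by the binomial theorem (the paper parameterizes both evaluations as $\sum_{k=0}^{m-2}\binom{m-1}{k+1}a^k=\frac{(a+1)^{m-1}-1}{a}$ with $a=2$ and $a=1$, where you compute the two sums separately). Incidentally, your intermediate values $\frac{1}{2}(3^{m-1}-1)$ and $2^{m-1}-1$ are the correct ones; the paper's penultimate display has a typographical slip, writing $\frac{3^{m-1}-1}{3}-\frac{2^{m-1}-1}{2}$ where it should read $\frac{3^{m-1}-1}{2}-(2^{m-1}-1)$, though both derivations arrive at the same correct closed form $\frac{1}{6}(3^m-3\cdot 2^m+3)=O(3^m)$.
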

\begin{proof}
For simplicity, we will assume that every DC application processes the default 
initial configuration. This configuration is the one that spans the largest family, 
hence the worst case where $b = 0$, that is $h=m-1$.
\begin{equation}
g(h) \le g(m-1) = \sum_{k=0}^{m-2} {{m-1}\choose{k+1}} (2^k - 1) = \sum_{k=0}^{m-2} {{m-1}\choose{k+1}}2^k - \sum_{k=0}^{m-2} {{m-1}\choose{k+1}}
\label{eq_proof_relations00}
\end{equation}
Both summations obey the following form:
\begin{alignat}{2}
\sum_{k=0}^{m-2} {m-1 \choose k+1} a^k 	&= 	\frac{1}{a} \sum_{k=1}^{m-1} {m-1 \choose k} a^k =  \frac{1}{a} \left(-1 + \sum_{k=0}^{m-1} {m-1 \choose k} a^k\right)
\end{alignat}
Using the Binomial theorem for the summation, we get
\begin{alignat}{2} 
\frac{1}{a} \left(-1 + \sum_{k=0}^{m-1} {m-1 \choose k} a^k\right) &=  \frac{(a+1)^{m-1}-1}{a}
\end{alignat}
Using $a=2$ and $a=1$ leads to the first and second terms of Eq. (\ref{eq_proof_relations00})
\begin{equation} 
g(h) \le g(m-1) = \frac{3^{m-1}-1}{3} - \frac{2^{m-1}-1}{2} = \frac{1}{6}(3^m - 3\cdot2^m + 3) = O(3^m)
\end{equation} 

\end{proof}

\subsubsection{Running time of the family trees strategy}
The asymptotic sequential running time of the family trees algorithm applied to a layer $K(V',E')$ of a strip lattice is:
\begin{alignat}{2}
T(m, K(V',E')) 	&= \Delta_m \Big(DC + g(m-1)\Big)\\
				&= O\Big(3^m\Big(min\Big(2^{|E'|}, \frac{1+\sqrt{5}}{2}^{|V'|+|E'|}\Big) + 3^m\Big)\Big)
\end{alignat}

The extra cost provided by $g(m-1)$ does not incur in too much extra computation compared to the cost of DC itself, where the amount of edges of $K(V',E')$ must at least 
double the amount of edges in the boundary, that is $E' \ge 2(m-1)$. Additionally, $g(m-1)$ is considering the worst case for each root configuration where $h = m-1$. In practice, 
all configurations, except for the default one, will have $h < m-b-1$ with $b>0$.

\subsubsection{Parallel family trees}
By default, the algorithm does not know the $\Delta_m$ different root configurations except for ${\sigma_1}$ 
which is given as part of the input of the strip lattice and is the one that triggers the computation. Under this scheme, the configuration space would have to be explored incrementally, 
each time adding a sub-set of configurations from the \textit{terminal configurations} found from a DC application. This is indeed a problem for parallelization because the data-parallel elements are being discovered 
sequentially, limiting the efficiency and scalability of a parallel computation. In order to solve this problem, we use a recursive generator $g(A[\ ][\ ], s, H,S)$, that with the help of a hash table $H$, generates 
all the $\Delta_m$ configurations before hand and stores them in an array $S$. $A[\ ][\ ]$ is an auxiliary array that stores the intermediate auxiliary subsequences and $s$ is the accumulated 
sequence of identifications. 
Before the first call to $g(A[\ ][\ ], s, H,S)$, $A = [[0,1,2,...,m-1]]$, $s$ is null and $H$ as well as $S$ are empty. 
$g(A[\ ][\ ], s, H,S)$ is defined as:
\lstinputlisting[frame=single]{generator.txt}

Basically, $g(..)$ performs a recursive partition of the domain $A$. 
If $|j-i| \le 3 $ then no further identifications can be carried on, otherwise the identification would be of length $l = 1$ and the generated configuration would not be a \textit{root configuration}. 
Similarly, for the top and bottom parts if $|j-i| \le 2$ then no more identifications are possible. Each time a new identification $i,j$ is added, 
the resulting configuration is checked in the hash table. If it is a new configuration, then it is added, else it is 
discarded as well as all further recursion computations continuing from that point. By using this approach we ensure that redundant recursion branches are never computed.
Once $g(..)$ has finished, $S$ becomes the array of all possible configurations and $H$ the hash that maps configurations to indices. 

Parallel family trees are achieved by first generating all root configurations with $g(..)$, followed by the parallel computation of $p$ family trees simultaneously, using $p$ processors and a total of $\Delta_m/p$ family trees 
per processor. The initial \textit{key} needed by each processor $p_i$ is obtained by reading in parallel from $S[p_i]$, assuming the PRAM-CREW model. 
Once the \textit{key} is obtained, it is applied to the \textit{external vertices} of its own local copy of the base layer ${\sigma_1}$. 
Foster's four-step strategy \cite{fo_95} describes the design process of a parallel algorithm; \textit{partitioning, communication, agglomeration, mapping}.
The design steps for the parallel family trees is illustrated in Figure \ref{fig_foster_strategy_mp}.
\begin{center}
	\includegraphics[scale=0.3]{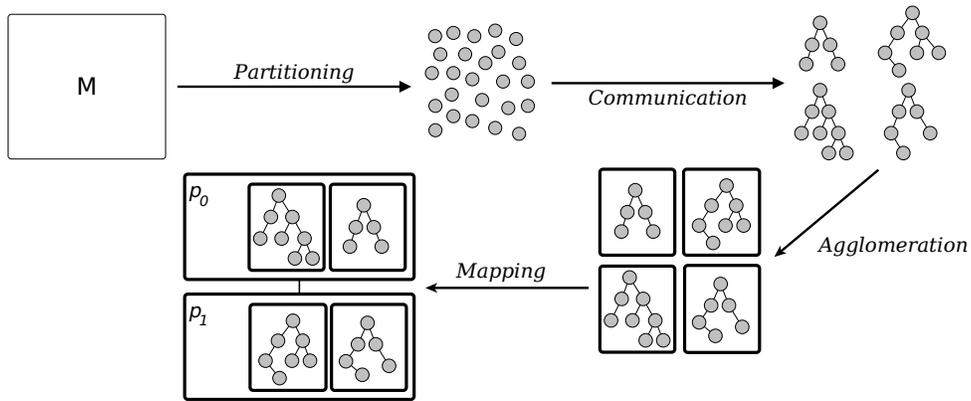}
	\captionof{figure}{Foster's four step strategy for achieving parallel family trees, for two processors.}
	\label{fig_foster_strategy_mp}
\end{center}
The work for each processor $p_i$ is divided in the following steps: (1) pick one root configuration \textit{key} 
from $S[]$, (2) apply it to its local copy of the $K_{\sigma_1}$ layer, (3) perform 
the DC procedure, (4) write the results into non-volatile memory, 
\textit{i.e.,} sub-tree results as well as the linear equations into disk, and (5) go to step (1) if there are still root configurations remaining. 
For step (3), familiars of a root configuration are detected at runtime within the PBT by computing its \textit{key}, each time the recursion comes from a contraction. When the beginning of a sub-tree is reached, 
no more familiars are guaranteed to be found on what is left of the recursion, therefore the algorithm can proceed to compute the whole sub-tree without needing to check for the existence of familiars. 
The solution of a sub-tree $t_i$ is a vector of expressions $z_{i,j}(q,v)$ that associates a $j$ index to a \textit{terminal configuration} $\varphi_j$ within the sub-tree $t_i$. 
The hash-map $H$ from the generator becomes useful for searching with average cost $O(1)$ the index $j$ of a terminal configuration $\varphi_j$. 
Also, $H$ ensures that all vectors are consistent with the order established in the generator and in the transfer matrix. 

The $2^{m-1}$ sub-tree vectors and the coefficients for the set of equations provide the solution for a whole family. Both of these results are saved locally for each processor. 
This output format based on sub-trees and coefficients makes the matrix compressed in the same proportion of the improvement in the running time. 

The asymptotic running time for the parallel family trees algorithm using $p$ processors is:
\begin{alignat}{2}
T(m) 	&= O\Big(\frac{3^m}{p} \Big[DC + g(k,m)\Big]\Big)\\
		&= O\Big(\frac{3^m}{p} \Big[min\Big(2^{|E'|}, \frac{1+\sqrt{5}}{2}^{|V'|+|E'|}\Big) + 3^m\Big]\Big)
\label{eq_pft_runningtime}
\end{alignat}

Further computations for achieving physical results require decompression of the matrix, leading to a matrix of Catalan dimensions again. 
In practice, large symbolic matrices need first to be evaluated before doing any analysis. 
If the numerical evaluation is performed before decompressing the matrix, then the process 
is much faster than first decompressing and then evaluating, even faster than evaluating an uncompressed transfer matrix on $(q,v)$. 
Numerical evaluation has the potential to be exponentially faster as a consequence of the parallel family trees compression, which is in the 
same order of the running time improvement.

The analysis of the algorithm has been made for the case of free boundary conditions but it is not restricted to it. For different boundary conditions such as cylindrical, full periodic or cyclic, the parallel family trees 
can be still applied following the same principle, while taking advantage of additional symmetries like the dihedral group in the cylindrical case. The rest of the paper assumes free boundary conditions unless we explicitly 
mention the contrary. 

For the case of a finite strip, the initial conditions vector $\vec{Z_1}$ is computed by applying DC to each one of the $C_m$ \textit{terminal configurations}:
\begin{equation}
\vec{Z_1} = (DC({\varphi_1}), DC({\varphi_2}), ..., DC({\varphi_{C_m}}))
\end{equation}
The computation of $\vec{Z_1}$ has very little impact on the overall cost of the algorithm and practically costs 
$O(mC_m)$ in time because a terminal configuration contains mostly \textit{spikes} and/or \textit{loops}, which are linear in cost for DC.

\section{Algorithm improvements}
\label{seq_algorithm_optimizations}
\subsection{Serial and Parallel paths}
The DC contraction procedure can be improved for graphs that present \textit{serial} or \textit{parallel} paths between two endpoints $v_a$ and $v_b$, as shown in Figure \ref{fig_opt_sp}.
\begin{center}
	\includegraphics[scale=0.7]{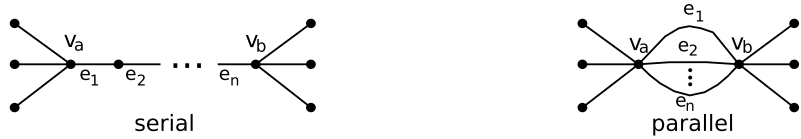}
	\captionof{figure}{Serial and parallel paths.}
	\label{fig_opt_sp}
\end{center}
A \textbf{\textit{serial path}}, denoted $s$, is a set of 
edges $e_1,e_2,...,e_n$ that connect sequentially $n-1$ vertices between $v_a$ and $v_b$.
It is possible to process a serial path of $n$ edges in one recursion step by using the following expression;
\begin{equation}
Z(K,q,v)=\Bigg[\frac{(q+v)^n - v^n}{q}\Bigg]Z(K_{-s},q,v) + v^nZ(K_{/s},q,v)
\end{equation}
\label{sec_optimizations}
A \textbf{\textit{parallel path} $p$} is a set of edges $e_1,e_2, ..., e_n$ that reduntandly connect $v_a$ and $v_b$.
It is possible to process a parallel path of $n$ edges in one recursion step by using the following expression;
\begin{equation}
Z(K,q,v)=Z(K_{-p},q,v) + \big[(1+v)^n - 1\big]Z(K_{/p},q,v)
\end{equation}

\subsection{Axial Symmetry}
One practical optimization is to detect the lattice's reflection symmetry when computing the root configurations as well as the Catalan configurations. 
When detecting reflection symmetry, the size of the configuration space is decreased for all symmetric pairs of \textit{configurations}, no matter if it 
is \textit{initial, terminal or root}. As the width of the strip lattice increases, the number of symmetric states increases too, leading to configuration spaces 
almost half the size of the original. 
We establish reflection symmetry between two \textit{configurations} $\varphi_a$ and $\varphi_b$ with 
keys $\pi_{a_1,...,a_n}$ and $\pi_{b_1,...,b_n}$ respectively in the following way:
\begin{equation}
\pi_{a_1,...,a_n} = \pi_{b_1,...,b_n} \Leftrightarrow a_i=(m-1)-b_{n-i+1} 
\end{equation}
Exploiting this symmetry results in a matrix size $C_m^s$:
\begin{align}
C_m^s 	& = \frac{C_m}{2} + \frac{m!}{2\lfloor\frac{m}{2}\rfloor!}
\end{align}
For large values of $m$, $C_m^s \approx \frac{C_m}{2}$.

For the case of \textit{root configurations}, Chang \textit{et. al.} \cite{salas2002} proved that the number of non-crossing non nearest-neighbor partitions under reflection symmetry, which we denote $\Delta_m^s$, is:
\begin{equation}
\Delta_m^s = \frac{1}{2}M_{m-1} + \frac{(m' - 1)!}{2} \sum_{j=0}^{\lfloor m'/2 \rfloor} \frac{m' - j}{(j!)^2(m'-2j)!}
\end{equation}
where $m' = \Big \lfloor \frac{m+1}{2} \Big \rfloor$. The expression was also obtained by Salas and Sokal \cite{salas_sokal_2011} for studying the square lattice symmetries 
when $v=-1$. When $m \to \infty$ we have: 
\begin{equation}
\Delta_m^s \sim \frac{\sqrt{3}}{4\sqrt{\pi}\ m^{-3/2}} 3^m \Big[1 + O({m^{-1}})\Big]
\end{equation}

Table (\ref{table_nonsym_sym_growth}) shows how the amount of Catalan and root configurations increase for non-symmetric and symmetric lattices up to $m=14$.
\begin{table}[ht!]
\begin{center}
\caption{Number of Catalan and root configurations under non-symmetric and symmetric cases.}
  \begin{tabular}{ r | r | r | r | r}
	$m$	& $C_m$	& $C_m^s$	&	$\Delta_m$	& $\Delta_m^s$ \\ 
	\hline
	1	& 1		& 	1	&	1	&	1\\
	2	& 2		& 	2	&	1	&	1\\
	3 	& 5 	& 	4	&	2	&	2\\
   	4 	& 14	& 	10	&	4	&	3\\
	5 	& 42	& 	26	&	9	&	7\\
	6	& 132	& 	76	&	21	&	13\\
	7	& 429	& 	232	&	51	& 	32\\
	8	& 1430	& 	750		&	127	&	70\\
	9	& 4862	& 	2494	&	323		&	179\\
	10	& 16796	& 	8524	&	835		&	435\\
	11	& 58786	& 	29624	&	2188 	&	1142\\
	12	& 208012	& 104468		&	5798 	&	2947\\
	13	& 742900 	& 372308		&	15511 	& 7889\\
	14	& 2674440	& 1338936		&	41835 	&	21051
  \end{tabular}
  \label{table_nonsym_sym_growth}
\end{center}
\end{table}
If cylindrical boundary conditions are used, then the reflection symmetry can be replaced by the symmetry of the dihedral group which 
further reduces the size of the matrix. For this manuscript we limit our work to the case of free boundary conditions.

\section{Implementation}
\label{sec_implementation}
We tried two implementations for the parallel family trees parallel algorithm; one using OpenMP \cite{Chapman:2007:UOP:1370966} and the other one using MPI \cite{mpi}. We observed that the MPI 
implementation achieved better performance in the multi-core scenario and allows parallel computation in a distributed scenario.
For this, we decided to continue the research with the MPI implementation for both multi-core and distributed scenarios. Basic mathematical operations on symbolic expressions are handled through the GiNaC C++ 
library \cite{Bauer20021}.
Parallel execution of the algorithm receives two parameters; the number of processors $p$ and the block size $B$, which is the amount of consecutive jobs per process. 
When the parallelization is unbalanced, the value of $B$ plays an important role for efficiently distributing work to all processors.
In our implementation we make each process to generate its own $H$ lookup table and $S$ array. This small sacrifice in memory leads to better performance than if $H$ and $S$ 
were shared among all processes. There are mainly three reasons why the replication approach is better than the sharing approach: (1) caches will not have to deal with consistency of 
shared data, (2) there is no sending/receiving of data structures and (3) the allocation of the replicated data is correctly placed on memory modules when working under a 
NUMA architecture. The last claim is true because on NUMA systems memory allocations on a given process are automatically placed in its fastest location according to the 
NUMA topology between memory and CPU cores. It is responsibility of the OS (or make manual mapping) to stick the process to the same processor throughout the entire computation.

The implementation writes each row to a persistent secondary memory (\textit{i.e.,} HDD or SSD) as soon as it is computed. Each processor does this with its own file, therefore the matrix is fragmented into $p$ files. 
In practice, a fragmented matrix is not a problem at all, because numerical evaluation is needed before using the matrix in its full form. Furthermore, a fragmented matrix 
allows parallel numerical evaluation.

\section{Performance results}
\label{sec_performance}
We have realized performance tests for the parallel transfer matrix method implemented with MPI for both shared and distributed memory scenarios.
The experimental design consists of measuring the main performance metrics (\textit{i.e.,} running time, speedup, efficiency, knee) 
of the implementation by computing the compressed transfer matrix several times, each time varying the number of processors $p$. We also 
compute the improvement factor with respect to previous work \cite{DBLP:conf/hpcc/NavarroHC13}. The experiments are divided 
into two categories; (1) multi-core and (2) cluster. For each case, we measure performance with two strip lattices; 
(1) \textit{square} and (2) \textit{kagome}, respectively (see Figure \ref{fig_test_lattices}). 
\begin{center}
\includegraphics[scale=0.6]{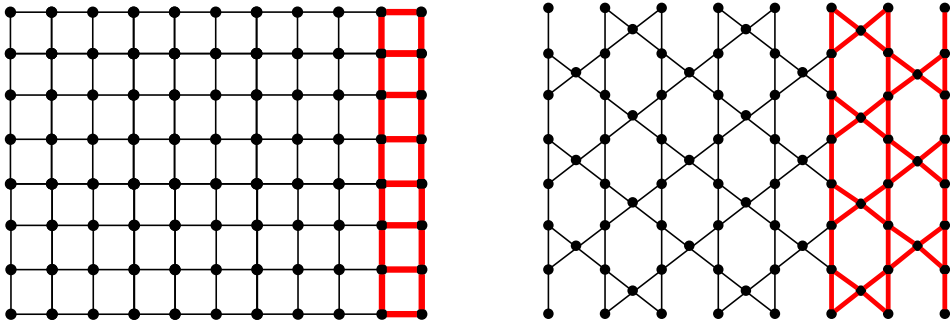}
\captionof{figure}{The square and kagome lattices used for measuring performance.}
\label{fig_test_lattices}
\end{center}
Explicit algebraic expressions for the sparse-matrix factorization of $M$ for all the Archimedean lattices (which include the square and kagome lattices)
have been computed by Jacobsen \cite{1751-8121-47-13-135001}, on finite lattice regions of up to $|E| = 882$ edges. The approach taken by the sparse-matrix 
differs from the standard transfer matrix technique, since the former processes a whole finite 
lattice region, using one sparse matrix computation per edge, while the latter 
computes a dense $TM$ for each different graph layer of width $m$.

\textit{Note}: $PFT$ refers to the actual \textit{Parallel Family Trees} strategy and $PCM$ to the \textit{Parallel Catalan Method} from \cite{DBLP:conf/hpcc/NavarroHC13}.
\subsection{Multi-core results} 
The machine used for the multi-core performance tests has an 8-core CPU AMD FX-8350 at $4.0$ GHz, 8GB of RAM and uses the \textit{openMPI} implementation of the MPI standard \cite{mpi}.

\subsubsection{Square strip lattice test}
For the square lattice, we measure performance for 9 different strip widths in the range $m \in [2,10]$. 
For each width, we measure 8 average execution times, one for each value of $p \in [1,8]$. As a whole, 
we perform a total of 72 average measurements for the square test. The standard error for each average execution time is below 5\%.
Different block sizes where tested, giving no significant difference on performance. For this reason, we kept a block size of $B=1$.
The other performance measures include speedup, efficiency and the \textit{knee}\footnote{In the knee, point counting is in reverse order.} \cite{DBLP:journals/tc/EagerZL89}. 
In this case we took advantage of the reflection symmetry for all sizes of $m$.

Figure \ref{fig_square_performance_results} shows all four performance measures for the square lattice. 
\begin{figure*}[ht!]
\centering
\begin{tabular}{cc}
\includegraphics[scale=0.59]{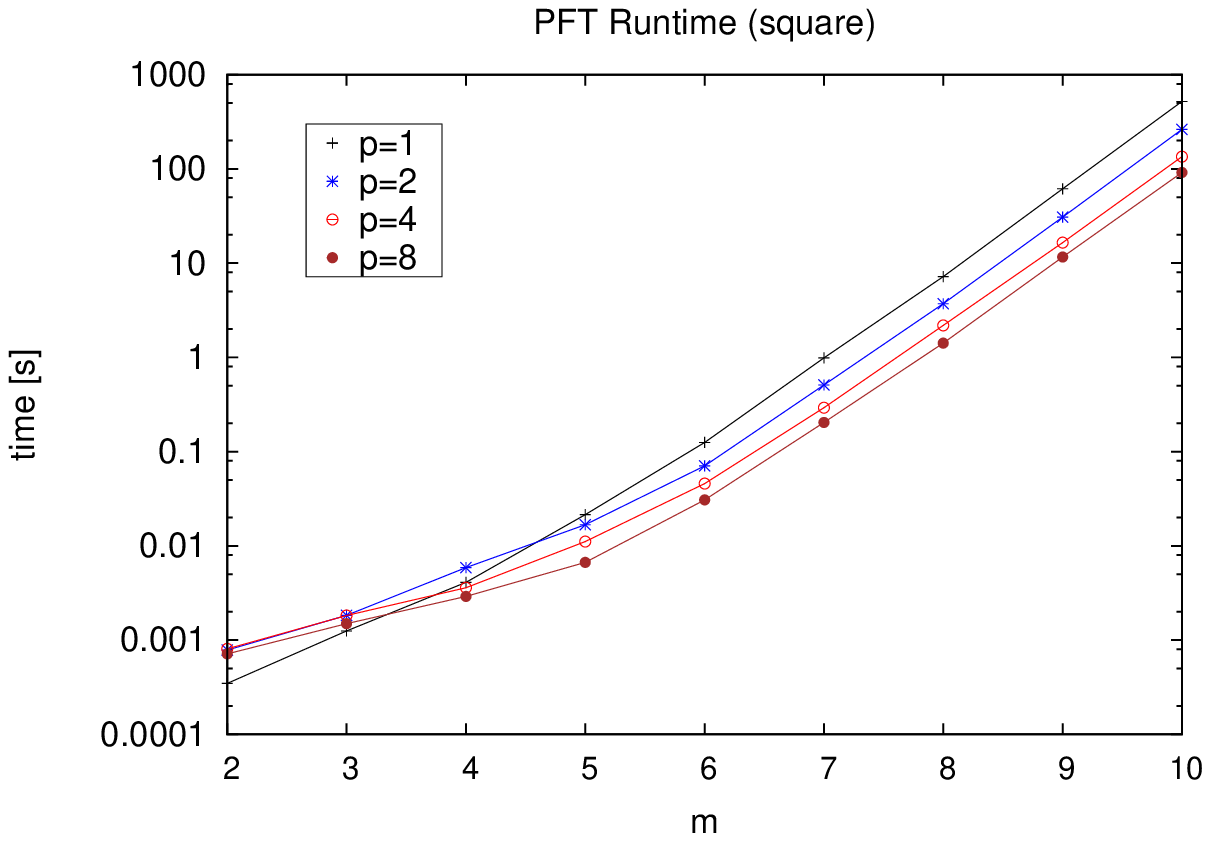} &
\includegraphics[scale=0.59]{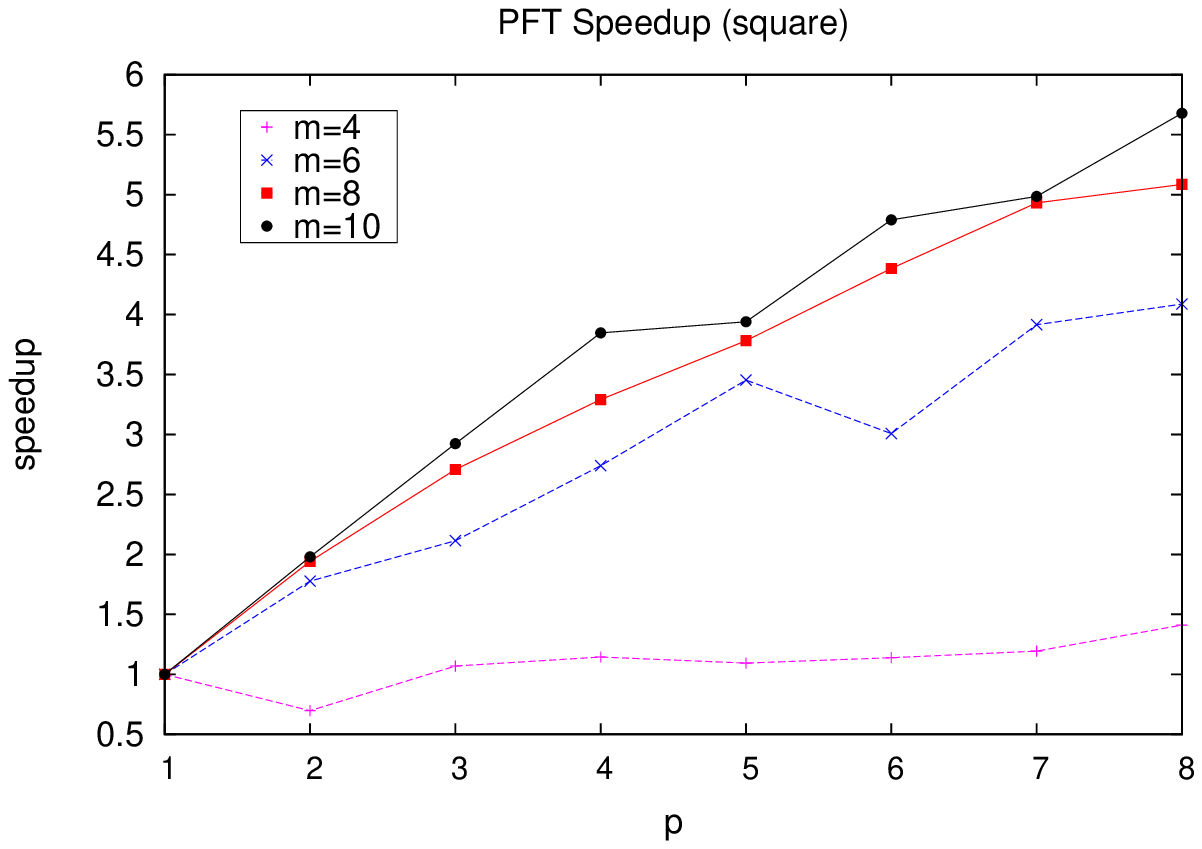}\\
\includegraphics[scale=0.59]{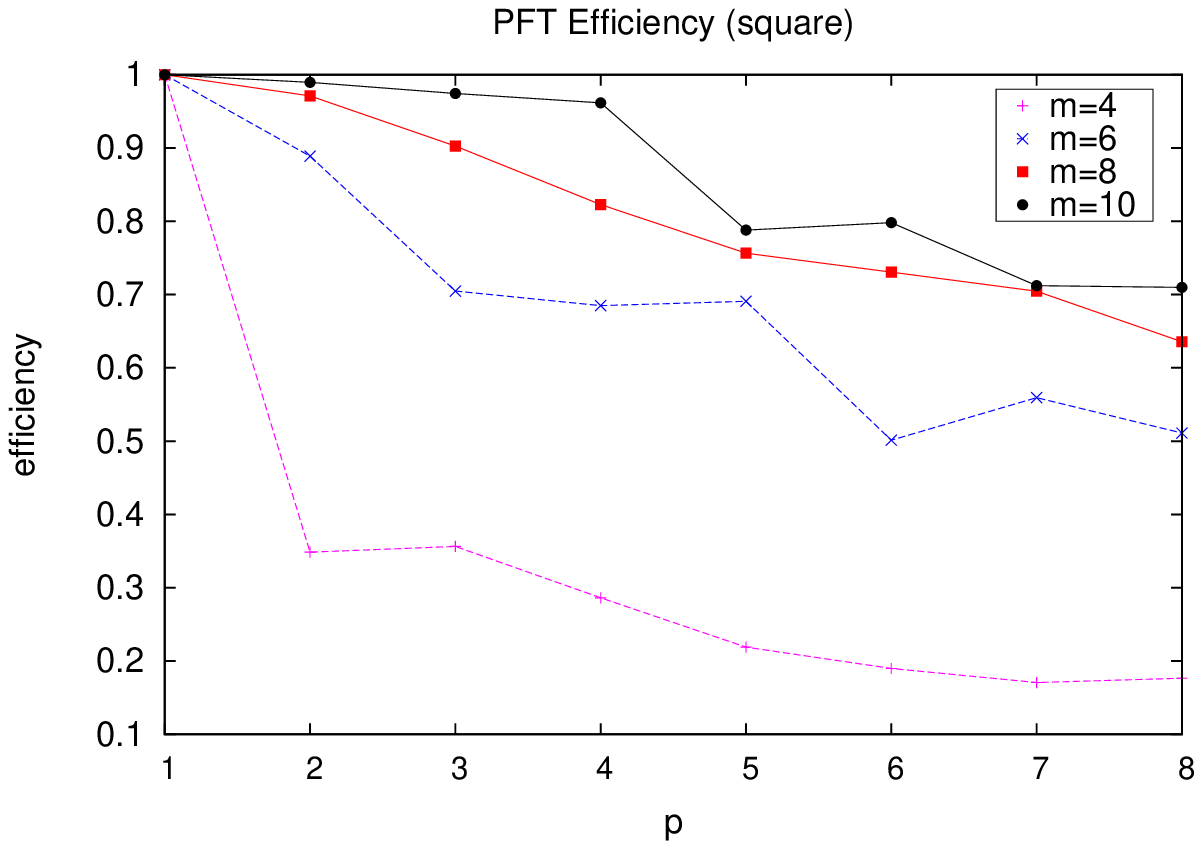} &
\includegraphics[scale=0.59]{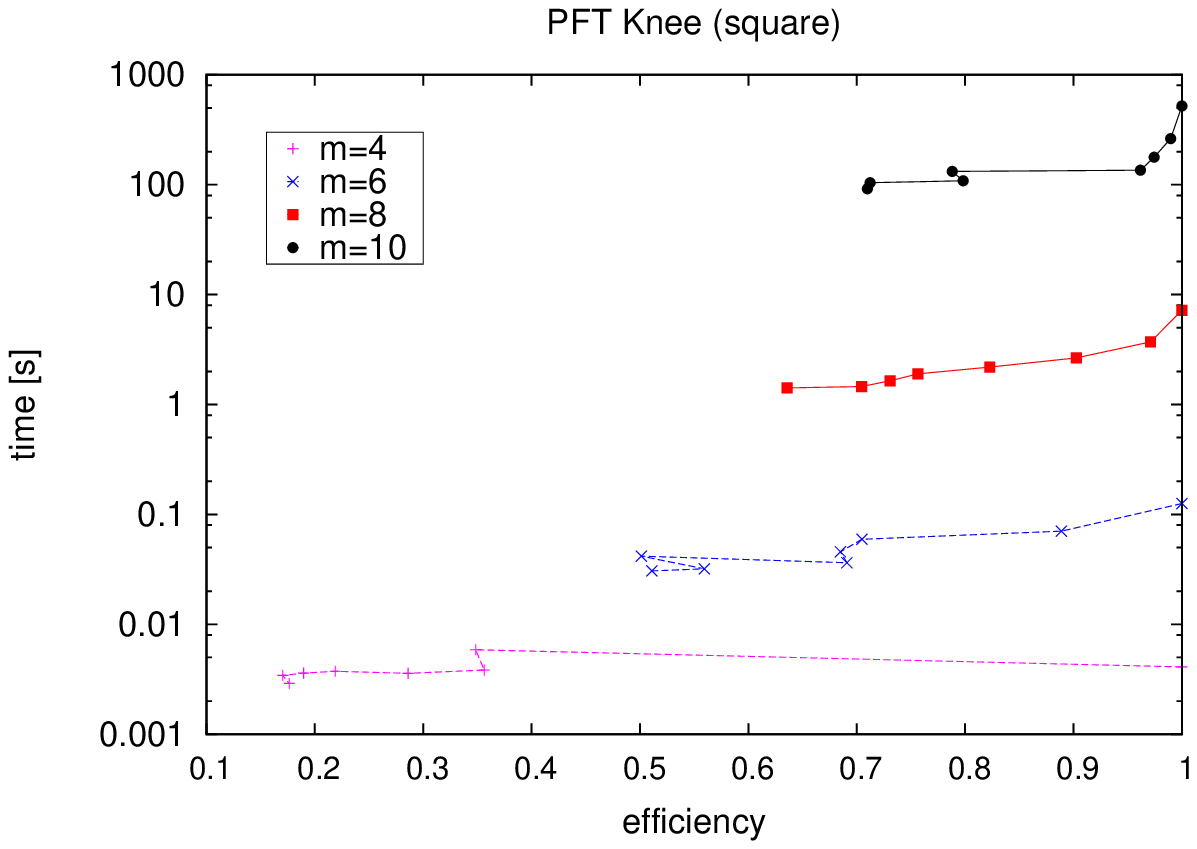}\\
\end{tabular}
\caption{Multi-core running time, speedup, efficiency and knee for the square strip test.}
\label{fig_square_performance_results}
\end{figure*}
From the results, we observe that the running time grows at an exponential rate 
which is compatible with the upper bound in (\ref{eq_pft_runningtime}), 
assuming that the cost of DC had a little impact on the algorithm. 
Indeed it is possible for DC to have a little impact, considering that algorithmic 
improvements are linear and they occur with more or less frequency depending on the 
edge selection order \cite{haggard_computing_tutte_polynomials} and the lattice structure. 
For the speedup, there is improved performance for every value of $p$ as long as $m > 4$. 
For $m \le 4$, the problem is not large enough to justify parallel computation, hence the 
overhead from MPI makes the implementation perform poorly and sometimes even worse than 
the sequential version. The plot of the execution times confirms this behavior since the curves cross each other for in the transition from $m = 3$ to $m = 4$.
The maximum speedup obtained was $5.7$ when using $p=8$ processors. From the lower left plot we can see that efficiency 
decreases as $p$ increases, which is expected in every parallel implementation. What is important is that for large enough problems (\textit{i.e.}, $m>6$), 
efficiency is over 62\% for all $p$. For the case of $p=4$, we report at least 95\% of efficiency, which is close to perfect linear speedup. 
For $m \le 6$, the implementation is not so efficient because the amount of computation involved is not enough to keep all cores working at full capacity. 
The \textit{knee} is useful for finding the optimal value of $p$ for a balance between efficiency and computing time. 
It is called knee because the hint for the optimal value of $p$ is located in the knee of the curve (thought as a leg), that is, its lower right part. In order to know the value of $p$ suggested by the knee, one has to count the position of the closest point to the knee region, in reverse order. Our results of the knee for $m>6$ show that the best balance of performance and efficiency is achieved with $p=4$ (for $m \le 6$, the knee is not effective since there was no speedup in the 
first place). In other words, while $p=8$ is faster, it is not as efficient as with $p=4$.

\subsubsection{Kagome strip lattice test}
For the test of the kagome lattice, we used 6 different strip widths in the range $m \in [2,7]$. 
For each width, we measured 8 average execution times, one for each value of $p \in [1,8]$. As a whole, 
we performed a total of 48 measurements for the kagome test. The standard error 
for each average execution time is below 5\%.
Additional performance measures such as speedup, efficiency and knee have also been computed. Different values of block size were tested, achieving noticeable differences on 
performance as $B$ changed. We found by experimentation that $B=1$ makes the work assignment slightly more balanced. In this test we can only use lattice axial symmetry for 
$m = 2, 4, 6, 8, ...\ $. For this reason we decided to run the whole kagome benchmark without axial symmetry in order to maintain a coherence between odd and even values of $m$.

Figure \ref{fig_kagome_performance_results} shows the performance results for the kagome strip test.
\begin{figure*}[ht!]
\centering
\begin{tabular}{cc}
\includegraphics[scale=0.59]{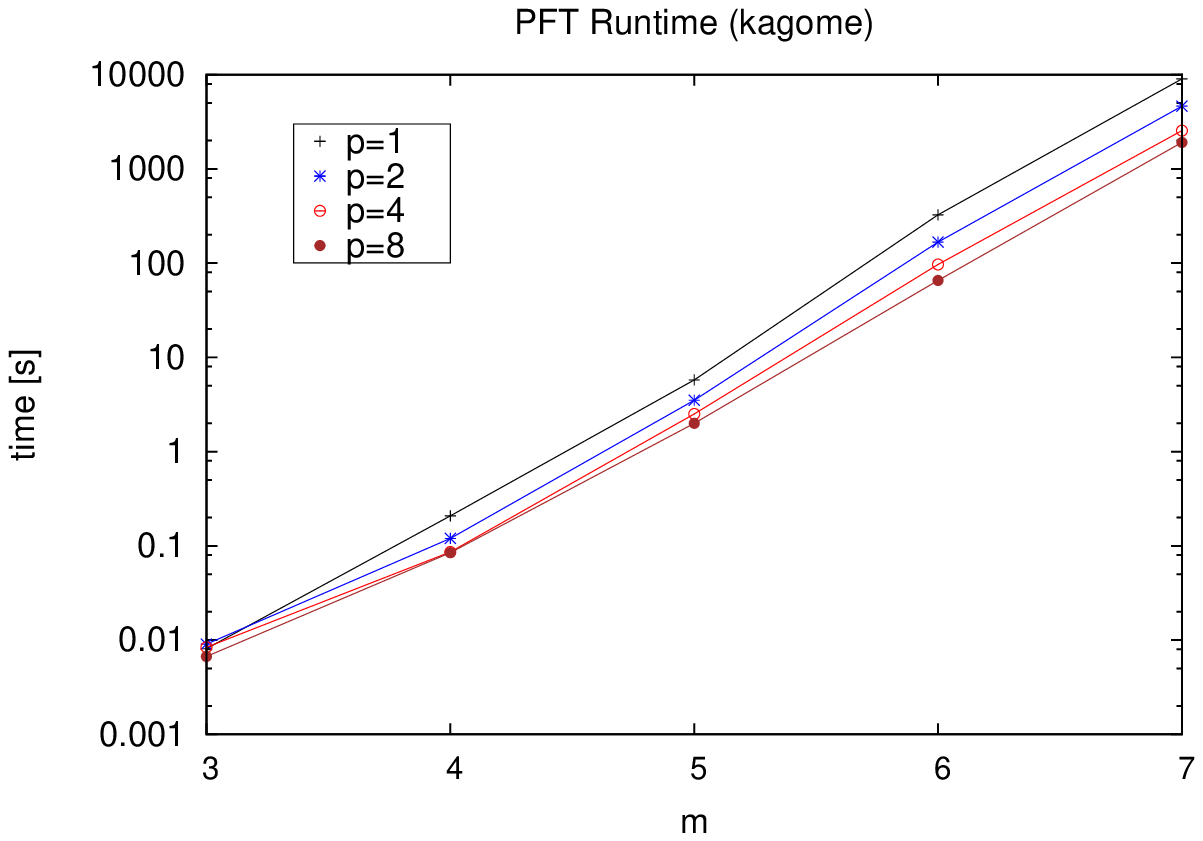} &
\includegraphics[scale=0.59]{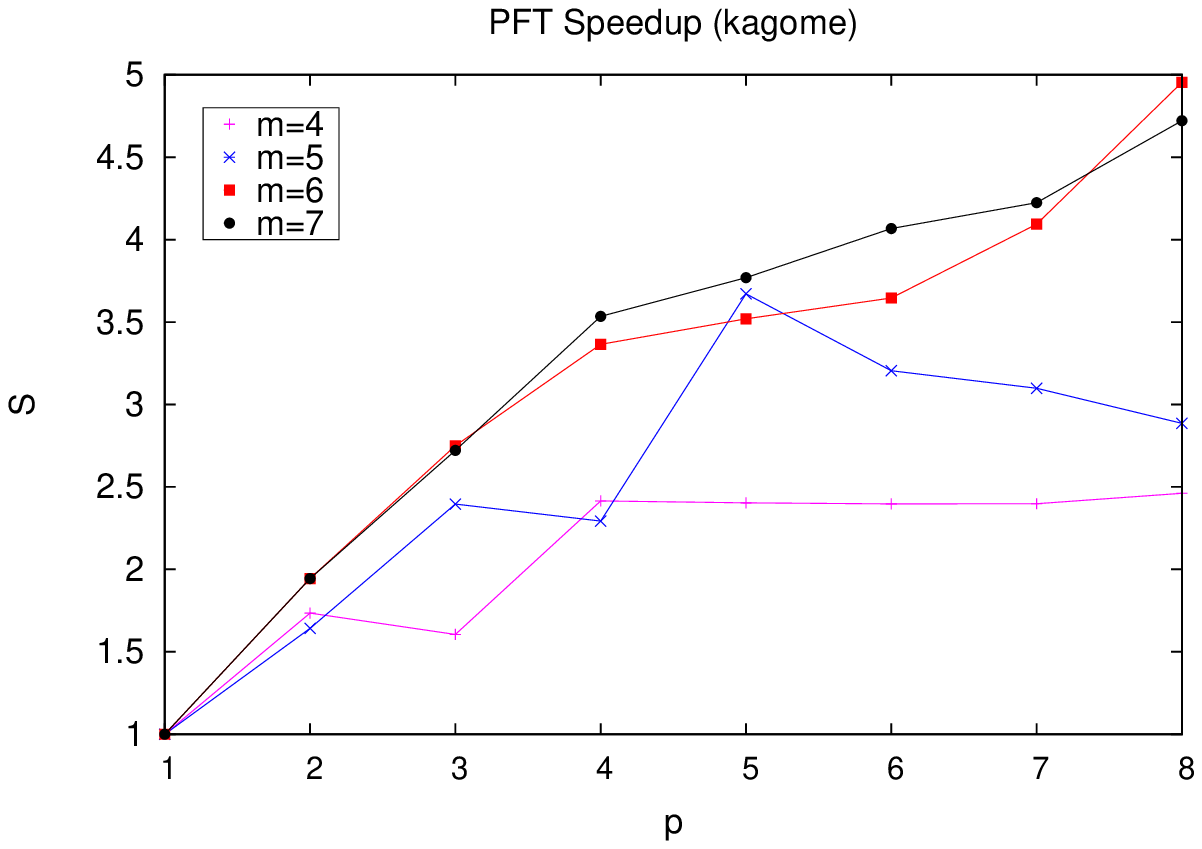}\\
\includegraphics[scale=0.59]{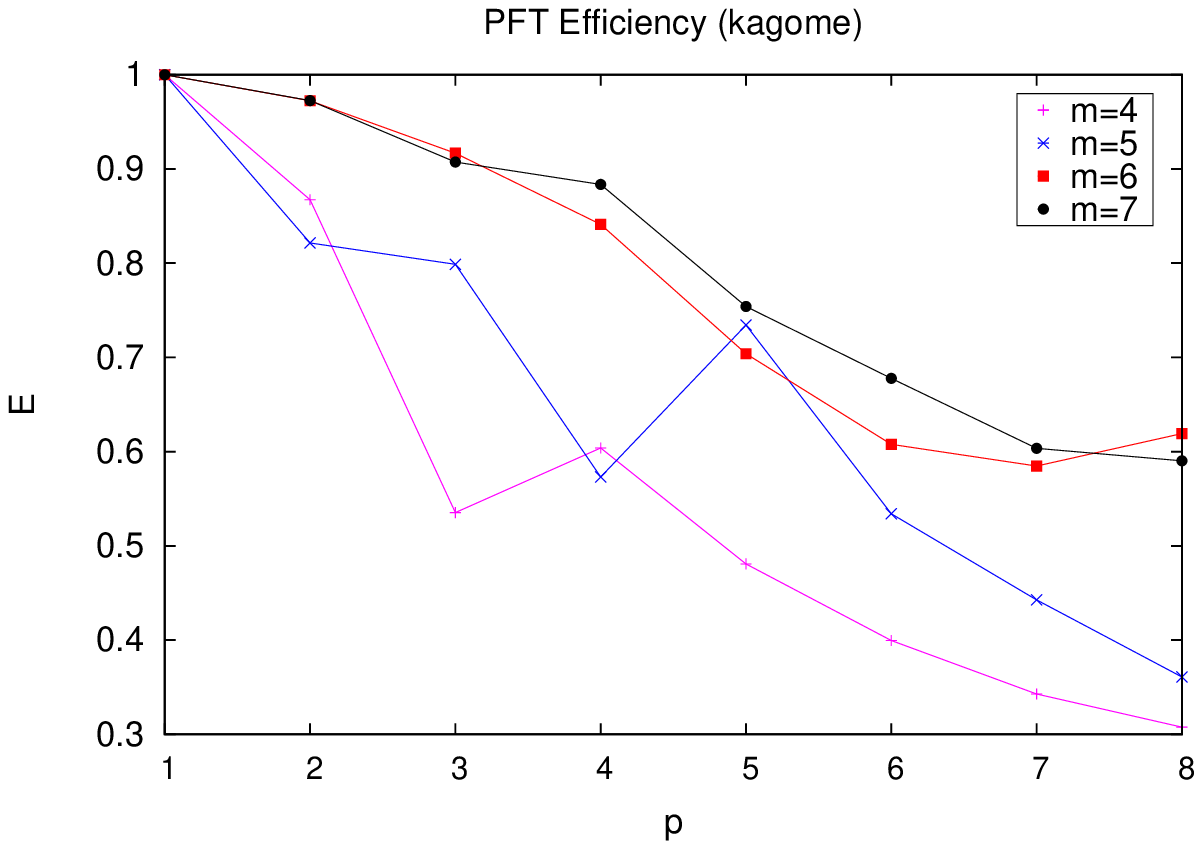} &
\includegraphics[scale=0.59]{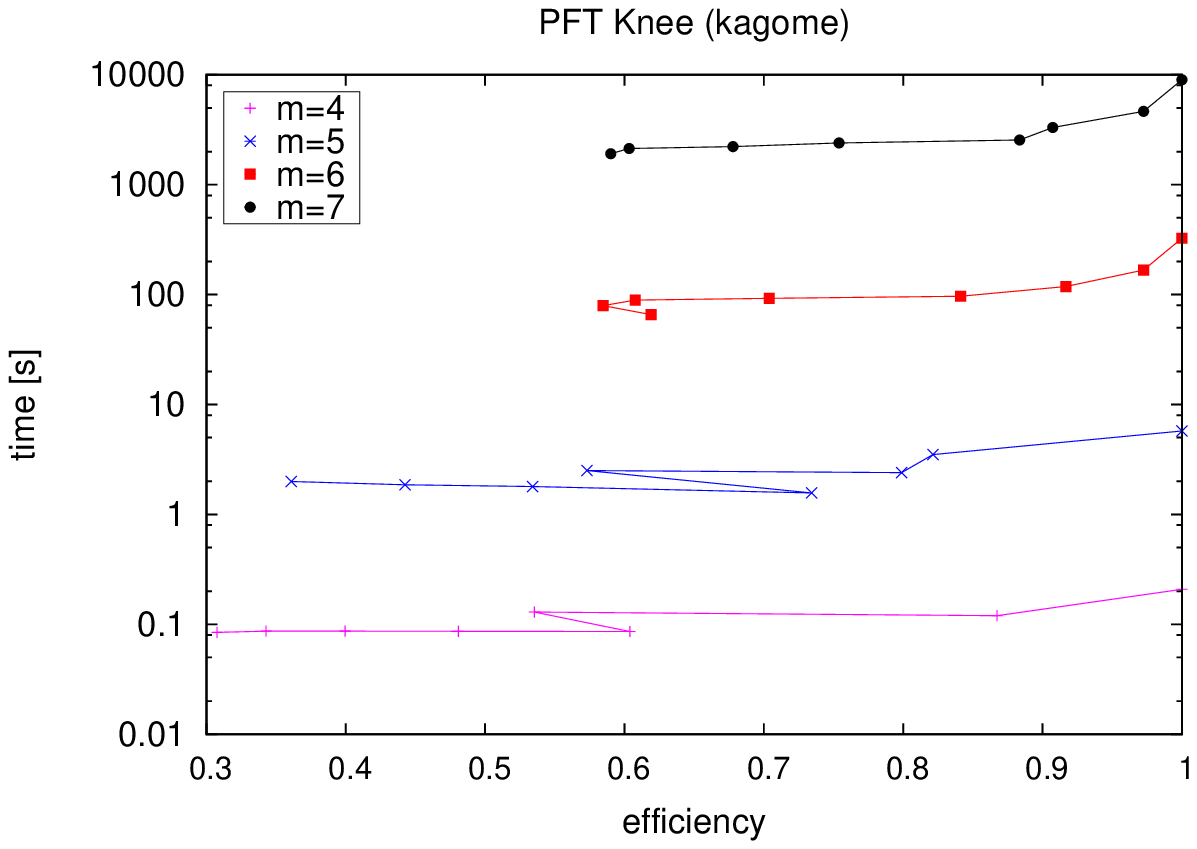}\\
\end{tabular}
\caption{Multi-core running time, speedup, efficiency and knee for the kagome strip test.}
\label{fig_kagome_performance_results}
\end{figure*}
From the results we have that the parallel performance is still 
scalable even for dense layers; the maximum speedup is over $4.7$ for $p=8$ on the largest problems. 
When $m>5$, the efficiency of the parallel implementation is approximately over $60\%$ for all 
values of $p$. In this test the knee is harder to identify, however for the largest problems one can see a 
small curve that suggests $p=4$ which is in fact 90\% 
efficient when solving large problems.

\subsection{Cluster results} 
The cluster used for the tests has a total four nodes; each one with 32GB RAM and two quad-core processors Xeon 5500 2.26 GHz. The full systems offers a 
total of 32 processing cores and 128GB RAM. The network is Ethernet gigabit centralized and the implementation of MPI is \textit{openMPI}.  
\subsubsection{Square results}
For the test of the square strip lattice in the cluster environment, we tested 9 different strip widths in the range $m \in [2,10]$. 
For each width, we measure 32 average execution times, one for each value of $p \in [1,32]$. This process is repeated for both static and dynamic scheduling. 
The standard error for each average execution time is below 5\%. For the dynamic scheduler we have chosen a block size value of $B=1$. This value of $B$ produces the highest amount 
of communication between the worker processes and the scheduler, hence the most dynamic scenario. Advantage of axial symmetry has also been taken.

Figure \ref{fig_cluster_square_performance_results} shows the performance measures of the running time, speedup, efficiency and the \textit{knee} \cite{DBLP:journals/tc/EagerZL89} for the 
cluster environment. Note that for each color (size), the solid and dashed lines represent static and dynamic scheduling, respectively.

\begin{figure*}[ht!]
\centering
\begin{tabular}{cc}
\includegraphics[scale=0.59]{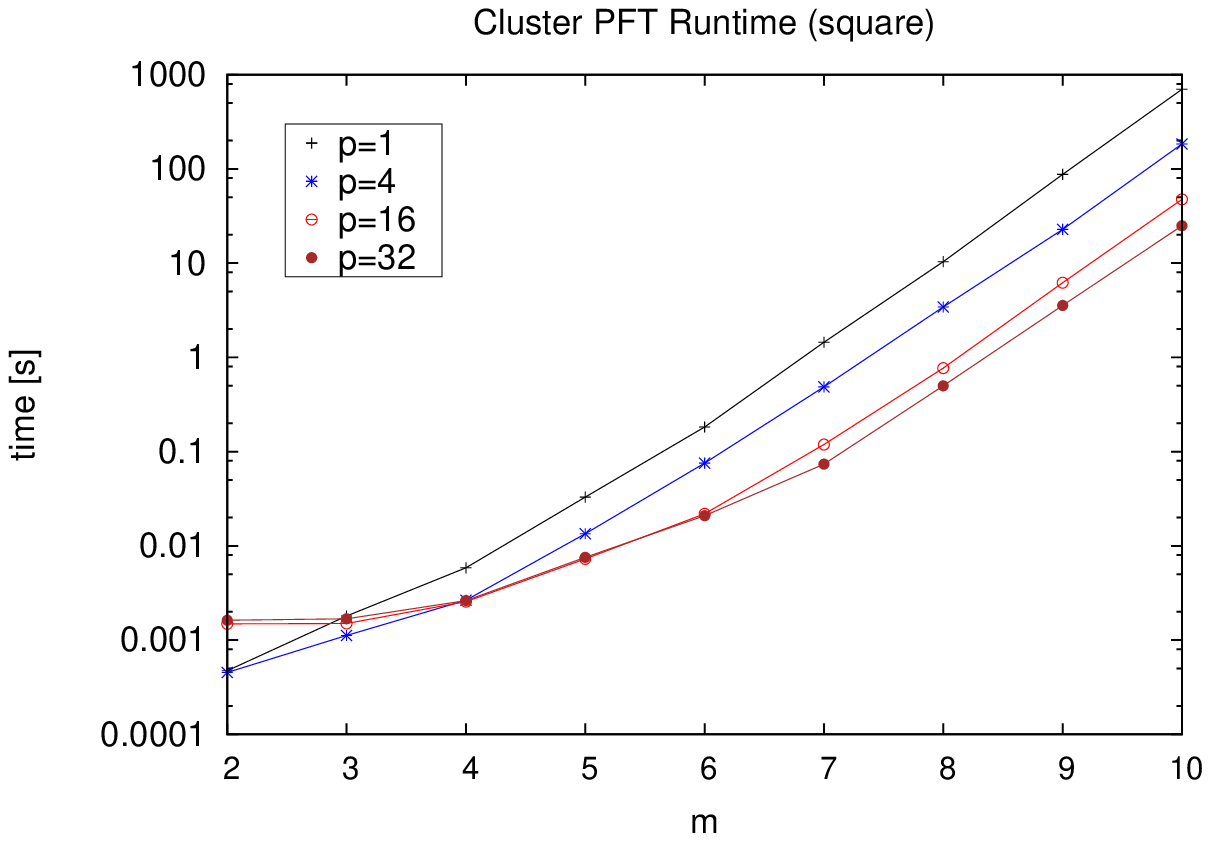} &
\includegraphics[scale=0.59]{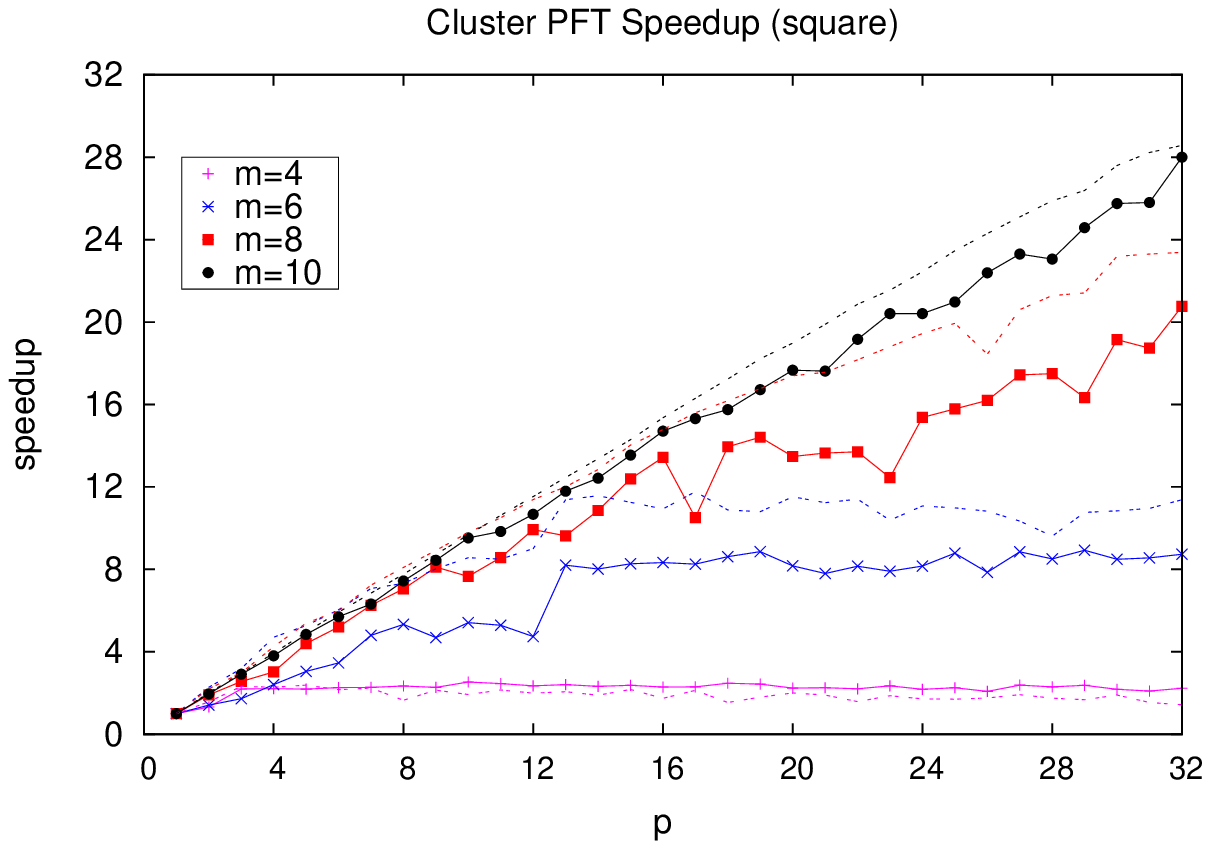}\\
\includegraphics[scale=0.59]{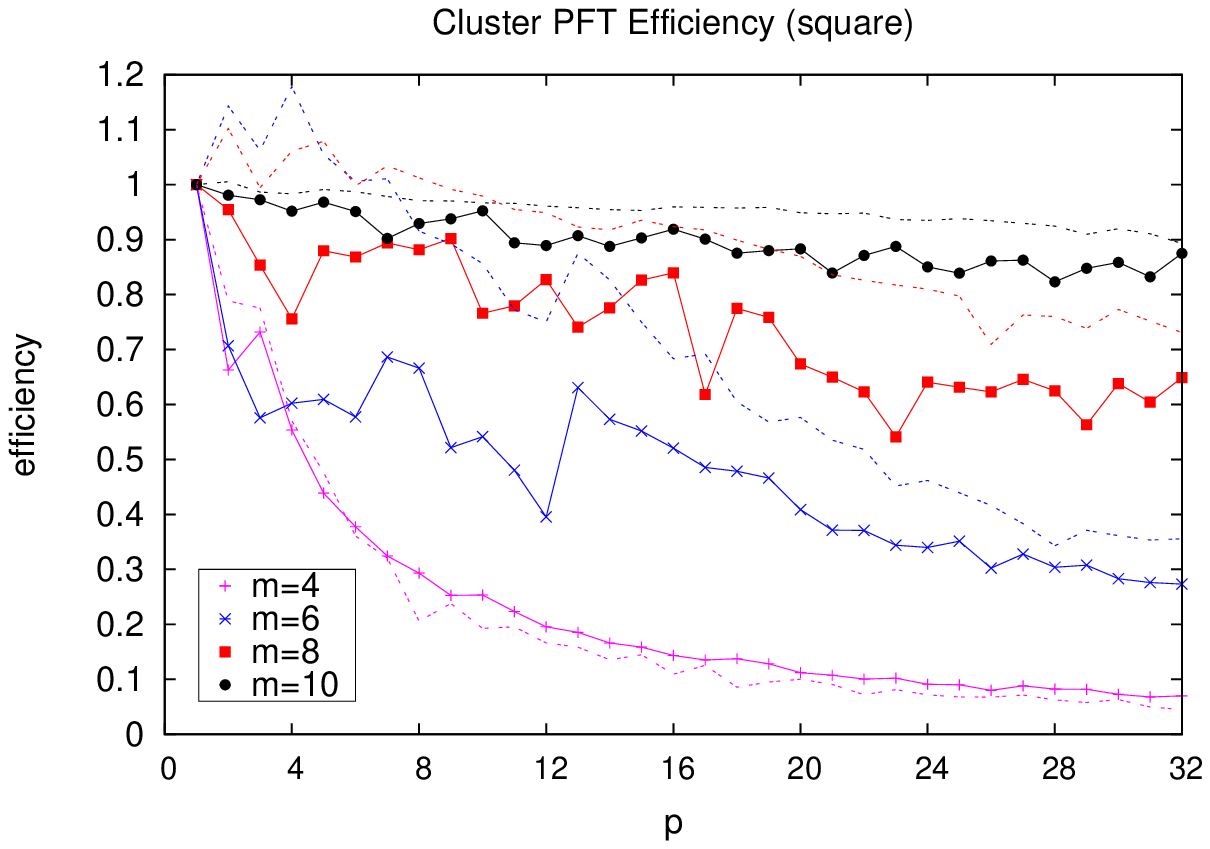} &
\includegraphics[scale=0.59]{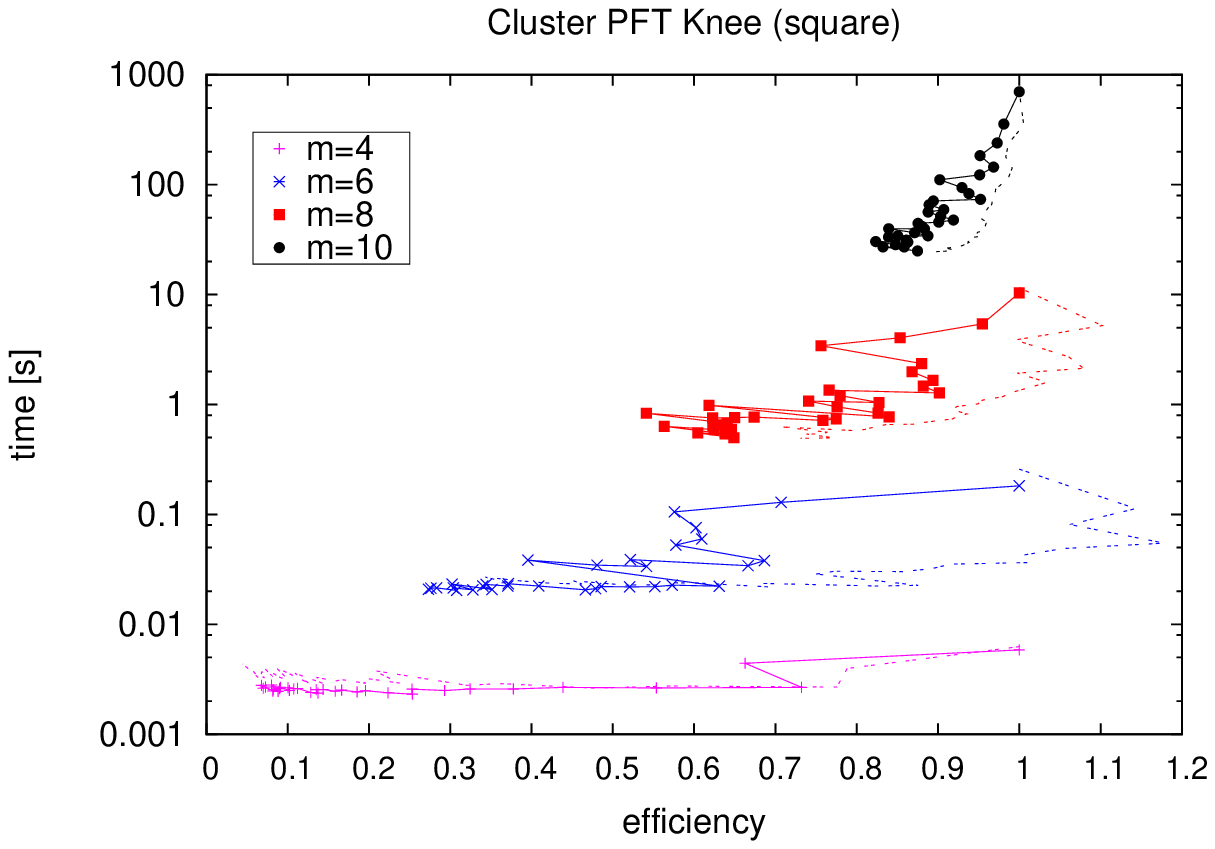}\\
\end{tabular}
\caption{Cluster running time, speedup, efficiency and the knee for the square strip test.}
\label{fig_cluster_square_performance_results}
\end{figure*}

From the results we observe that the reduction of the running time becomes effective starting from problems of size $m \ge 6$. Speedup has an overall linear behavior for the full range $p \in [1,32]$ which tells good scalability. 
Interestingly, near $p=4$ there is a region of \textit{super-linear speedup} \cite{wilkinson1999parallel} that occurs only for sizes $m=6,8$. For $p > 10$, super-linear speedup vanishes for all problem sizes.
In the cluster environment, the behavior between static (solid lines) and dynamic scheduling (dashed lines) is notorious; the former behaves irregularly producing several \textit{performance valleys}, 
while the latter behaves regularly, gives higher performance and produces close to zero \textit{performance valleys}. The maximum speedup achieved is approximately $28X$ for $p=32$, 
being superior in the dynamic case by a small margin. The efficiency of the parallel algorithm stays above $90\%$ for the largest case of $m=10$. Again, dynamic scheduler proves to 
be much more efficient than the static one when $m>6$, and overall the algorithm is over $70\%$ efficient for large enough problems, that is $m \ge 8$. 
The knee suggests that $p \in [8,10]$ gives the best balance of running time and efficiency whenever $m \ge 8$.

\subsubsection{Kagome results}
For the test of the kagome strip lattice in the cluster environment, we tested 5 different strip widths in the range $m \in [3,7]$. 
For each width, we measure 32 average execution times, one for each value of $p \in [1,32]$. This process is repeated for both static and dynamic scheduling. 
The standard error for each average execution time is below 5\%. For the dynamic scheduler we have chosen a block size value of $B=1$, same as in the square cluster test. 

Figure \ref{fig_cluster_square_performance_results} shows the performance measures of running time, speedup, efficiency and the \textit{knee} \cite{DBLP:journals/tc/EagerZL89} 
for the cluster environment. Note that for each color (size), the solid and dashed lines represent static and dynamic scheduling, respectively.
\begin{figure*}[ht!]
\centering
\begin{tabular}{cc}
\includegraphics[scale=0.59]{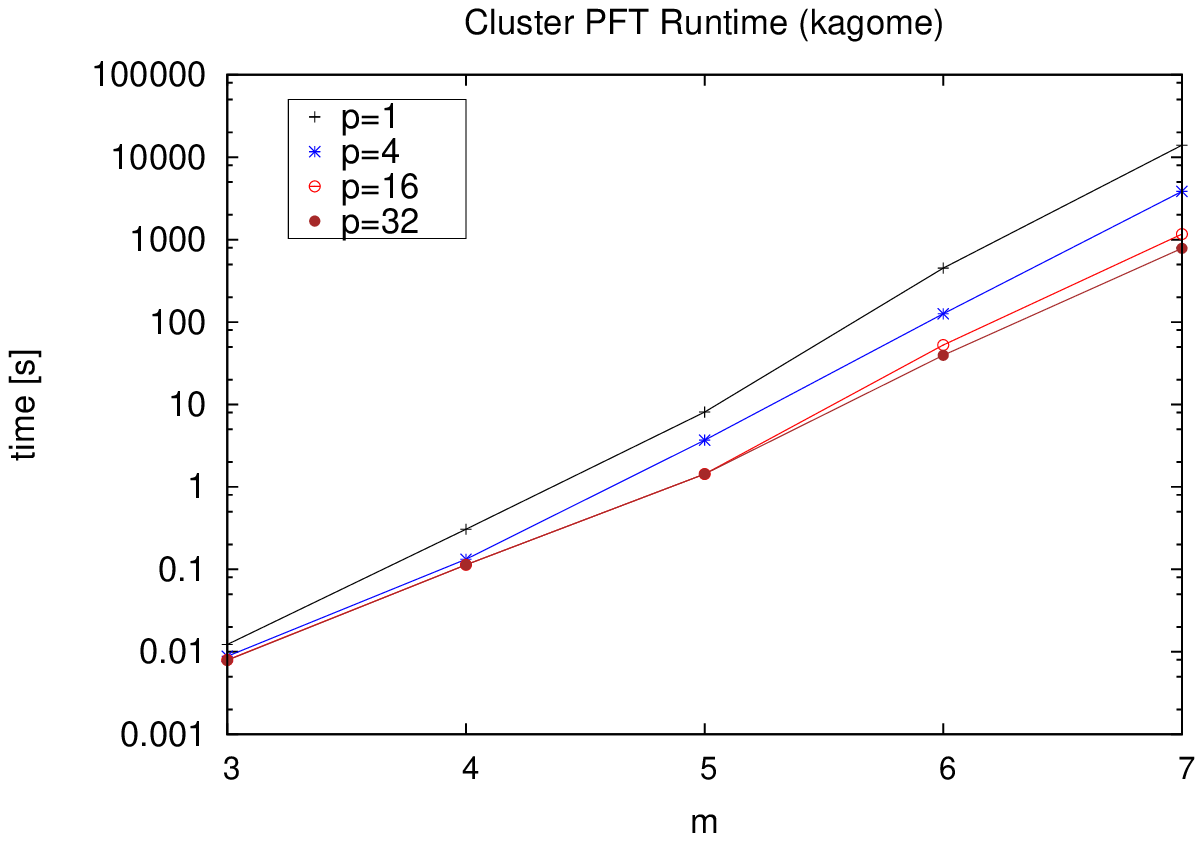} &
\includegraphics[scale=0.59]{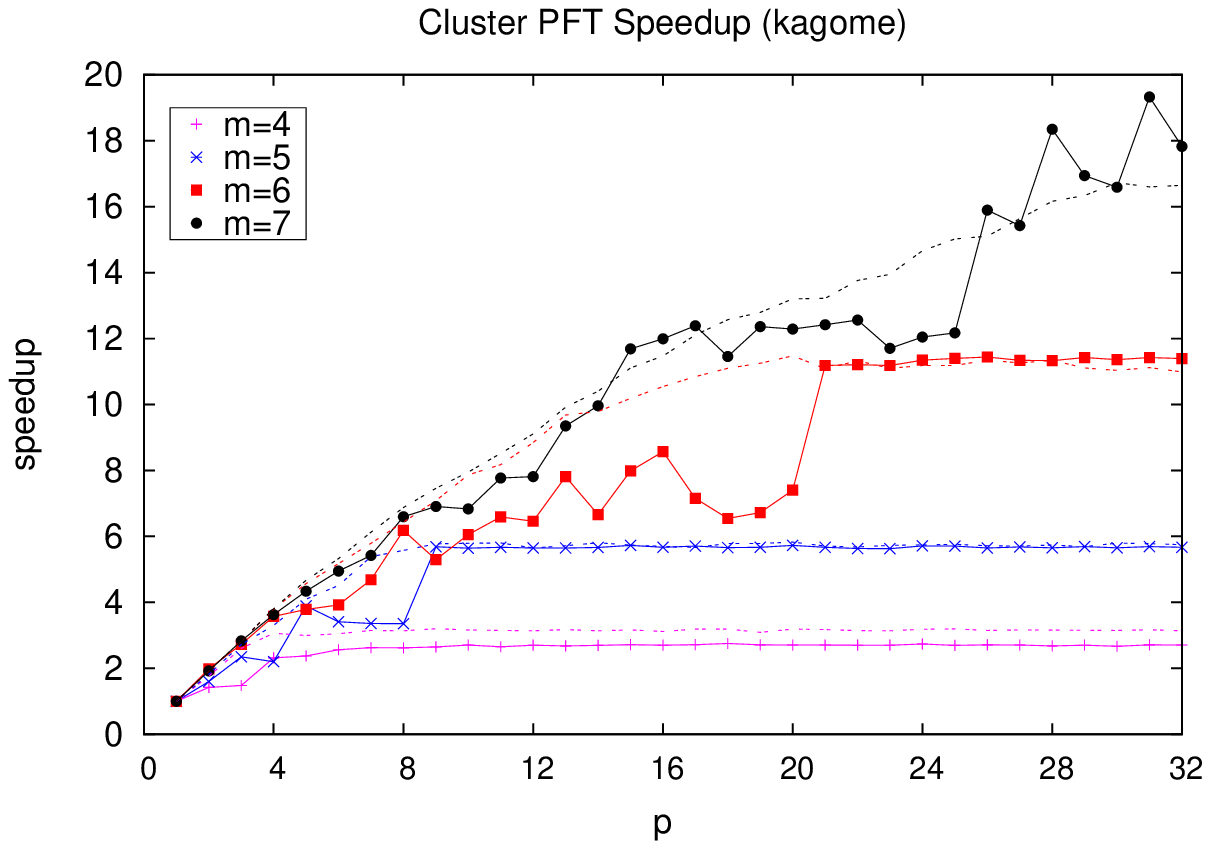}\\
\includegraphics[scale=0.59]{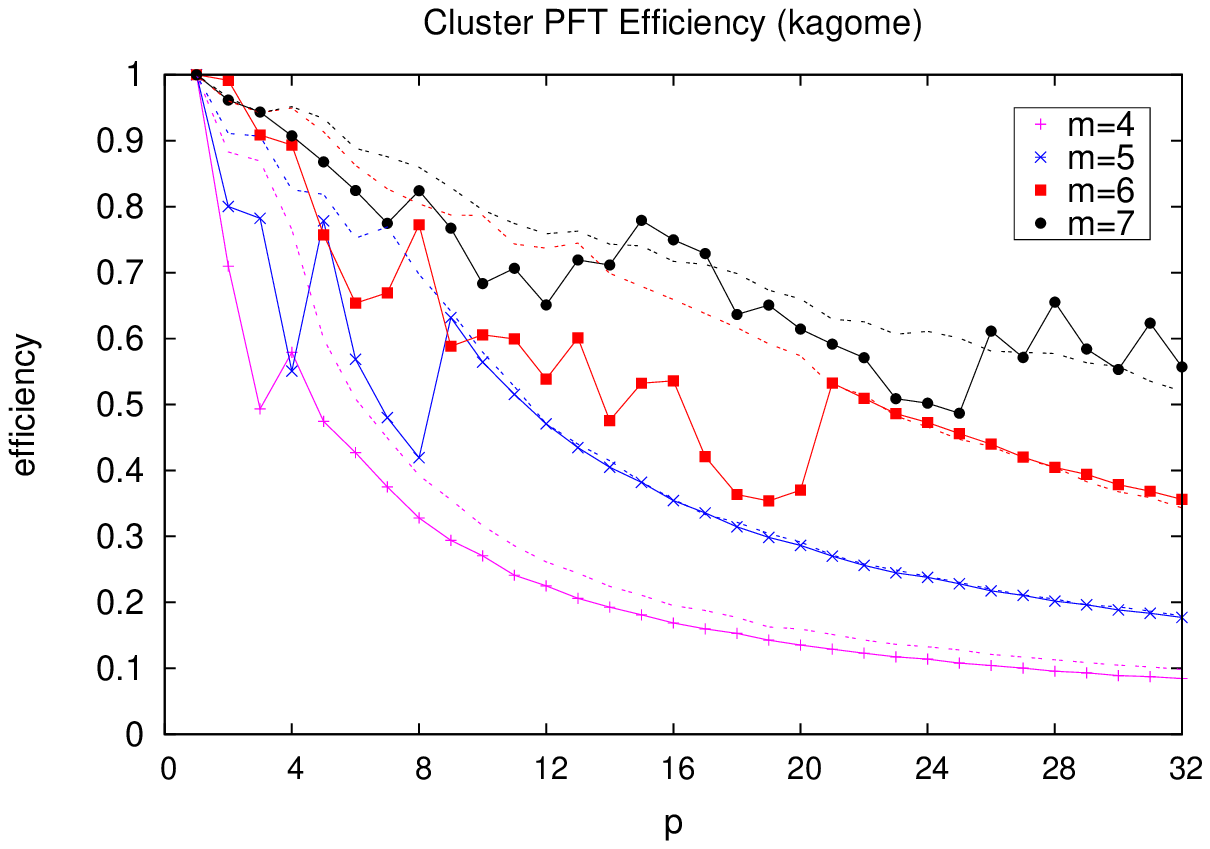} &
\includegraphics[scale=0.59]{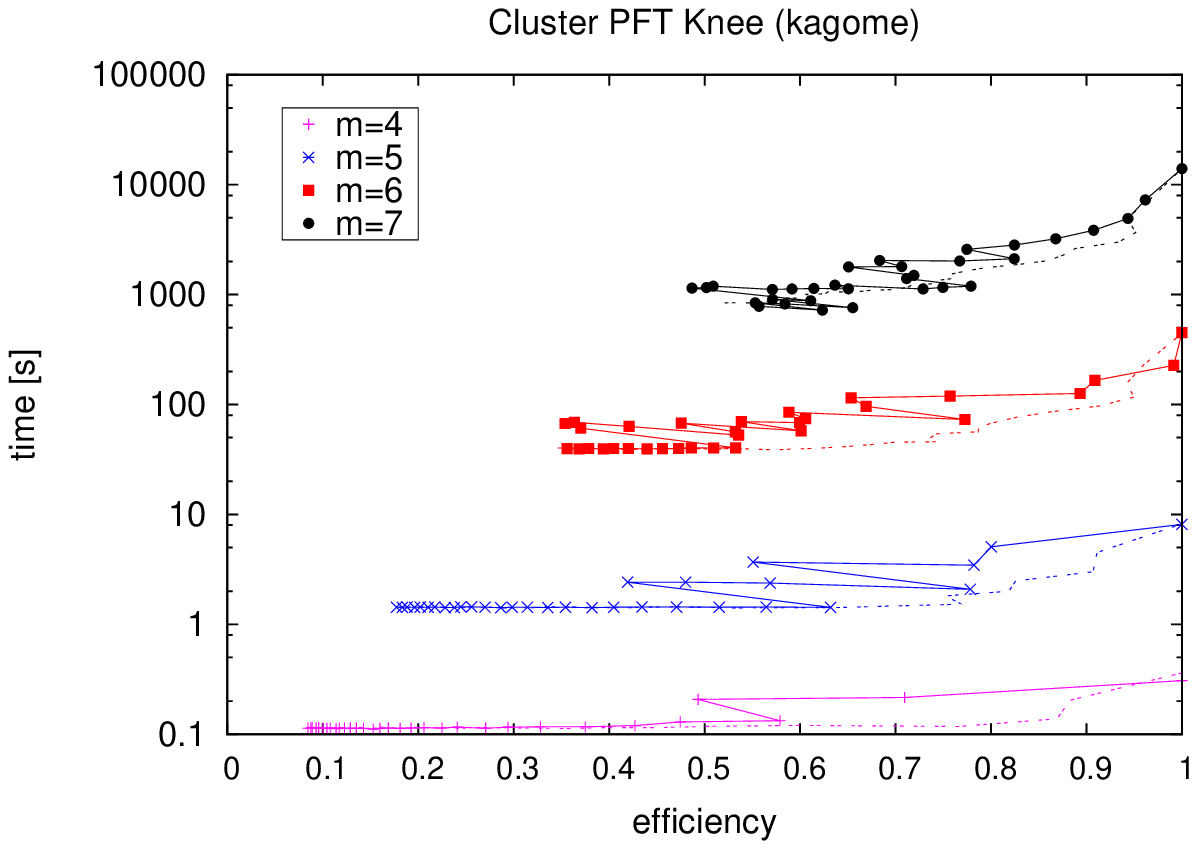}\\
\end{tabular}
\caption{Cluster running time, speedup, efficiency and knee for the kagome strip test.}
\label{fig_cluster_kagome_performance_results}
\end{figure*}

The results show that the reduction of the running time becomes effective in a cluster as long as $m \ge 6$. In this case, speedup is closer to a logarithmic curve rather than a 
linear one. It is interesting to note that speedup gets stuck at specific values for sizes $m=4,5,6$. The reason why is because 
the size of the configuration space is not large enough for cluster execution; $\Delta_m \le 32$ for $m = 4,5,6$. In fact, the values of $p$ where speedup starts to get stuck
actually match the values found for $\Delta_4, \Delta_5, \Delta_6$ in Table \ref{table_nonsym_sym_growth}. This phenomenon is totally normal in cluster or supercomputer 
environments, where the amount of work needed to reach full system occupancy is not always provided by the problem input. In order for speedup to take off, the configuration space 
must be equal or greater than the amount of processors available in the system.

There is a notorious difference in performance between static and dynamic scheduling. With dynamic scheduling, the \textit{performance valleys} are practically non-existent, giving a much more stable parallel 
performance for the full range of $p$. Efficiency is not as good as in the square test; the largest problem is solved with an efficiency over $55\%$, while the others reach below $50\%$ 
at some point of $p$. Dynamic scheduling proves to be in average more efficient than static scheduling, by-passing the \textit{performance valleys}. The Knee curve suggests a value 
$p \approx 8$ for a good balance between running time and efficiency.

\subsection{Impact of DC on algorithm performance}
We observed from the results that the running time of PFT applied to the kagome strip is slower 
than in the square strip. 
DC may cost too much in layers with a dense number 
of edges if optimizations do not occur too frequently. 
For the square lattice layer, we can write the DC worst case cost as
$O(2^{2m}) - O(opt) = O(4^m - opt)$ which is one of the fastest 
cases we can find, and optimizations, namely $O(opt)$, appear without too much effort. 
If we multiply this cost by the configuration space we have that the upper bound for the time to 
compute the transfer matrix of the square strip is 
$O(3^m \times (4^m - opt)) = O(12^m - 3^m \cdot opt)$, 
which is a notorious improvement with respect to the $O(16^m)$ bound 
with the standard Catalan technique, even if no DC optimizations occur.
Now for the kagome we can write the DC worst case cost as $O(2^{6m}) - O(opt) = O(64^m - opt)$ 
which would cost $O(3^m \times (64^m - opt)) = O(192^m - 3^m \cdot opt)$ in time 
when computing the  matrix. 
For dense layers the performance depends on how good the optimizations 
are and how frequently one can make them appear for a specific strip type. In our case the 
optimizations for kagome did not occur as frequent as in the square case because we programmed 
the heuristics in a very general way, nevertheless the method still managed to perform 
at least two times faster than the Catalan approach. 
It should be possible to make DC become more aware of the kagome structure and make it to generate 
the maximum number of optimization opportunities, as mentioned in the work of 
Haggard et. al. \cite{haggard_computing_tutte_polynomials}.

\subsection{Performance on wider strips}
We ran the PFT method to compute general $(q,v)$ transfer matrices 
on square strips at $m=\{11,12,13\}$ and kagome strips at $m=\{8,9\}$, using free boundary 
conditions and all the $32$ processors we had available. 
For the square strip, the computation of the TM took $\sim 5.5$ minutes for width $m=11$, 
$\sim 46$ minutes for width $m=12$ and $\sim 6.7$ hours for width $m=13$.
For the kagome strip, the computation of the TM took between $11 \sim 12$ hours at width $m=8$ and 
$\sim 3$ months at width $m=9$. 
These results were not included in the performance plots because it would have required 
excessive amount of time to benchmark for all values of $p$, specially for $p = 1$ where the 
computation is sequential. 
For the kagome strip we consider that we have reached the limit 
of tractability and wider kagome strips would become intractable\footnote{We consider that a problem becomes 
intractable when the time it takes to be solved is in the order of years for a given computer. 
It is possible that a faster computer can handle the problem, 
making it tractable.} with our hardware resources. 
For the square strip, we believe it is still possible to go further with our hardware resources, 
possibly up to $m=14$ or in the best scenario $m=15$ before reaching intractability. Moreover, 
if cylindrical boundary conditions are used, then it should be possible to go further beyond 
by using the symmetry of the dihedral group.

An important aspect of having a parallel solution is that if 
enough processors are used, that is $p = \Delta_m$, then the time for computing the 
transfer matrix becomes proportional to the depth of the largest directed-acyclic graph (DAG) 
of computation, which would correspond to the time required to solve the deepest family. 
The DAG concept allows to know what to expect when having more processors (\textit{i.e.}, a supercomputer) and 
gives insights on the limits of computation regarding parallelism.
If we apply the DAG concept to our results, we have that the time needed to compute the TM 
for the square strip would have been less than $5$ seconds for $m=11$ using $p = 1142$ processors, 
less than $10$ seconds for $m=12$ using $p = 2947$ processors and less than $5$ minutes for $m=13$
using $p = 7889$ processors. 
Analogous for kagome; the time needed to compute the TM would have been 
between $2 \sim 3$ hours for $m=8$ using $p = 70$ processors and 
$\sim1$ week for $m=9$ using $p = 323$ processors. 
As we mentioned earlier, DC heuristics that are aware of the kagome structure should improve the 
performance further.

\subsection{Comparison with related work}
In this subsection we compare the \textit{Parallel Family Trees} (PFT) strategy against the \textit{Catalan Parallel Method} (CPM) \cite{DBLP:conf/hpcc/NavarroHC13} by using the 
following metrics: (1) running time (2) matrix evaluation time and (3) matrix space. Figure \ref{fig_compare_results} shows the results.
\begin{figure*}[ht!]
\centering
\begin{tabular}{cc}
\includegraphics[scale=0.59]{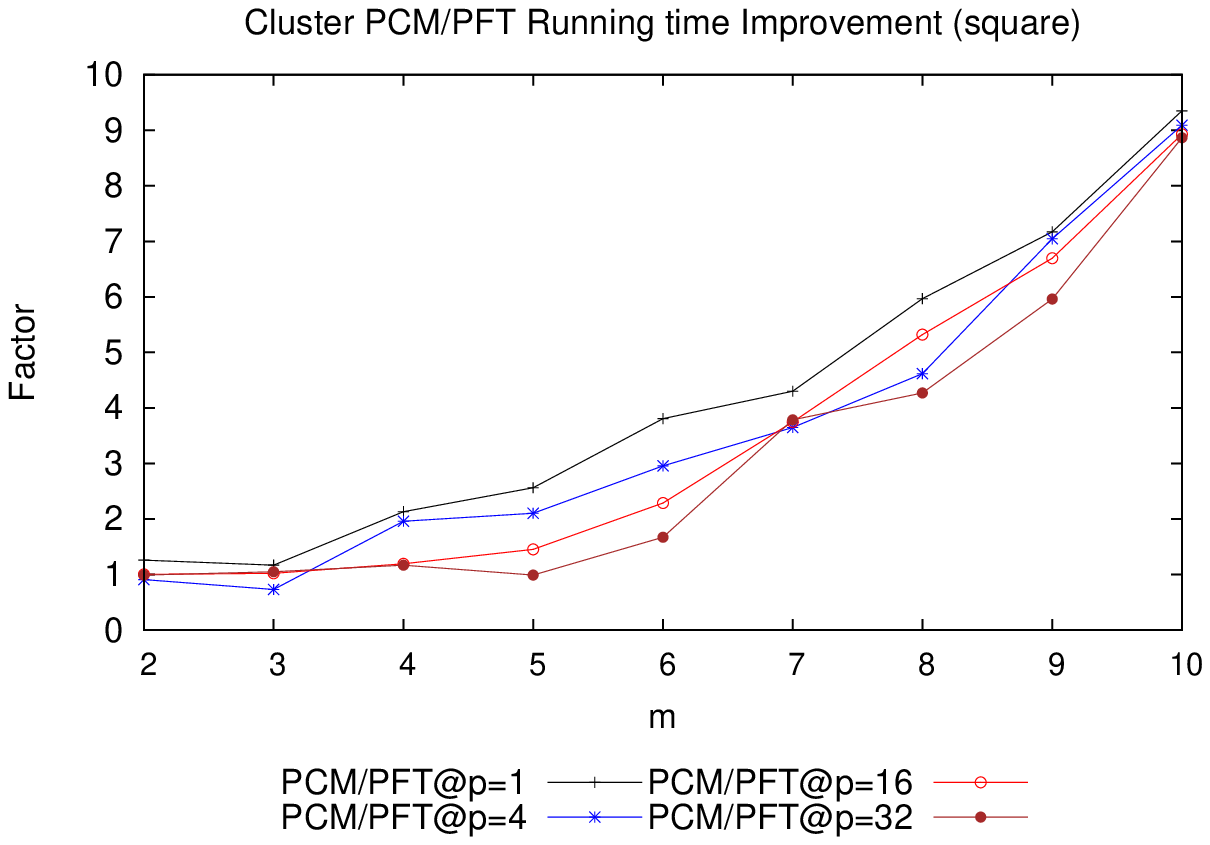} &
\includegraphics[scale=0.59]{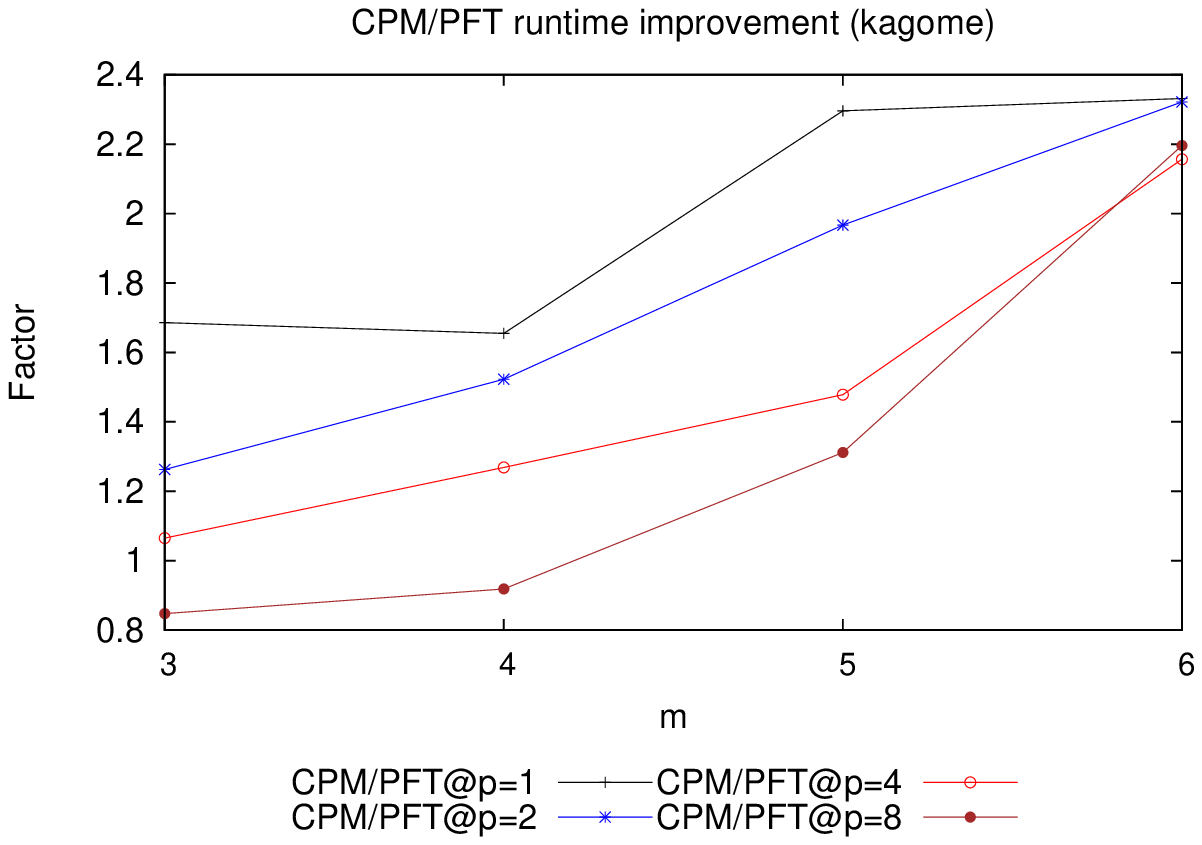}\\
\includegraphics[scale=0.59]{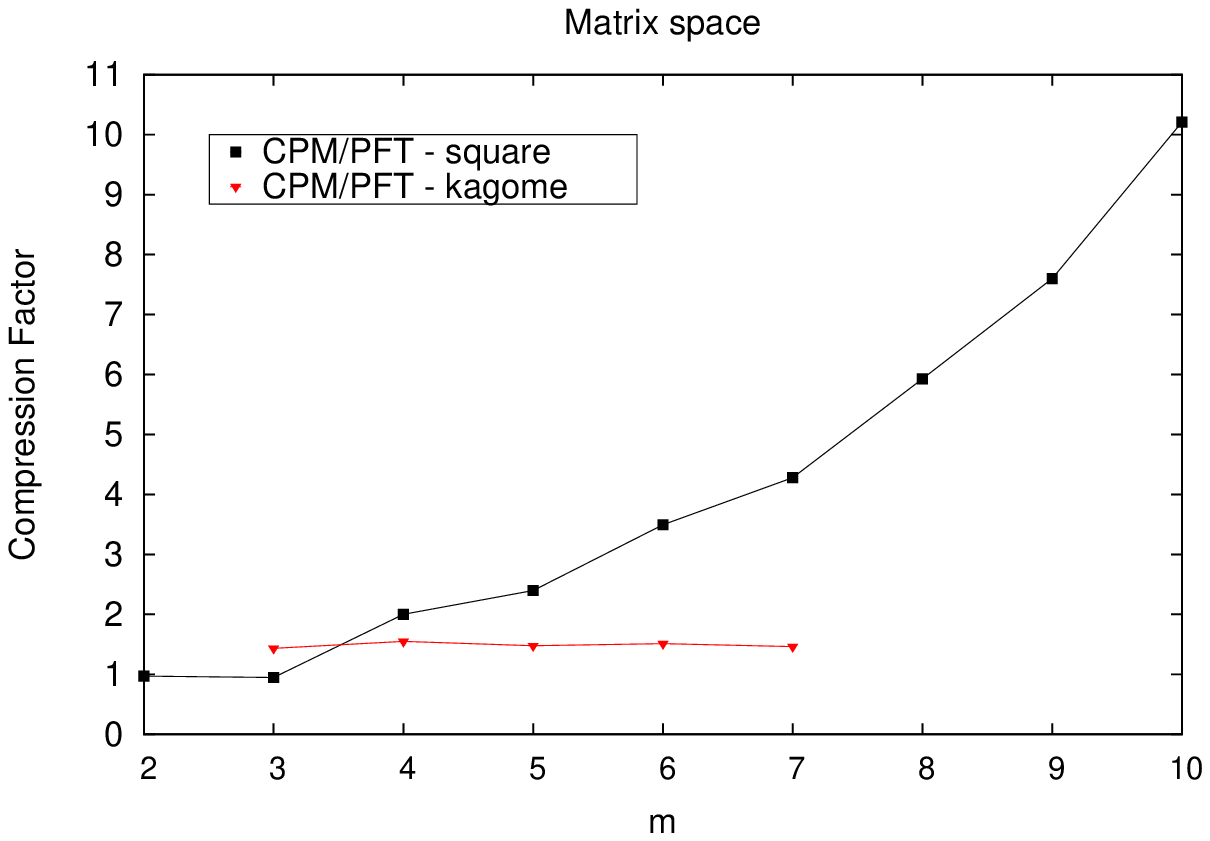} &
\includegraphics[scale=0.59]{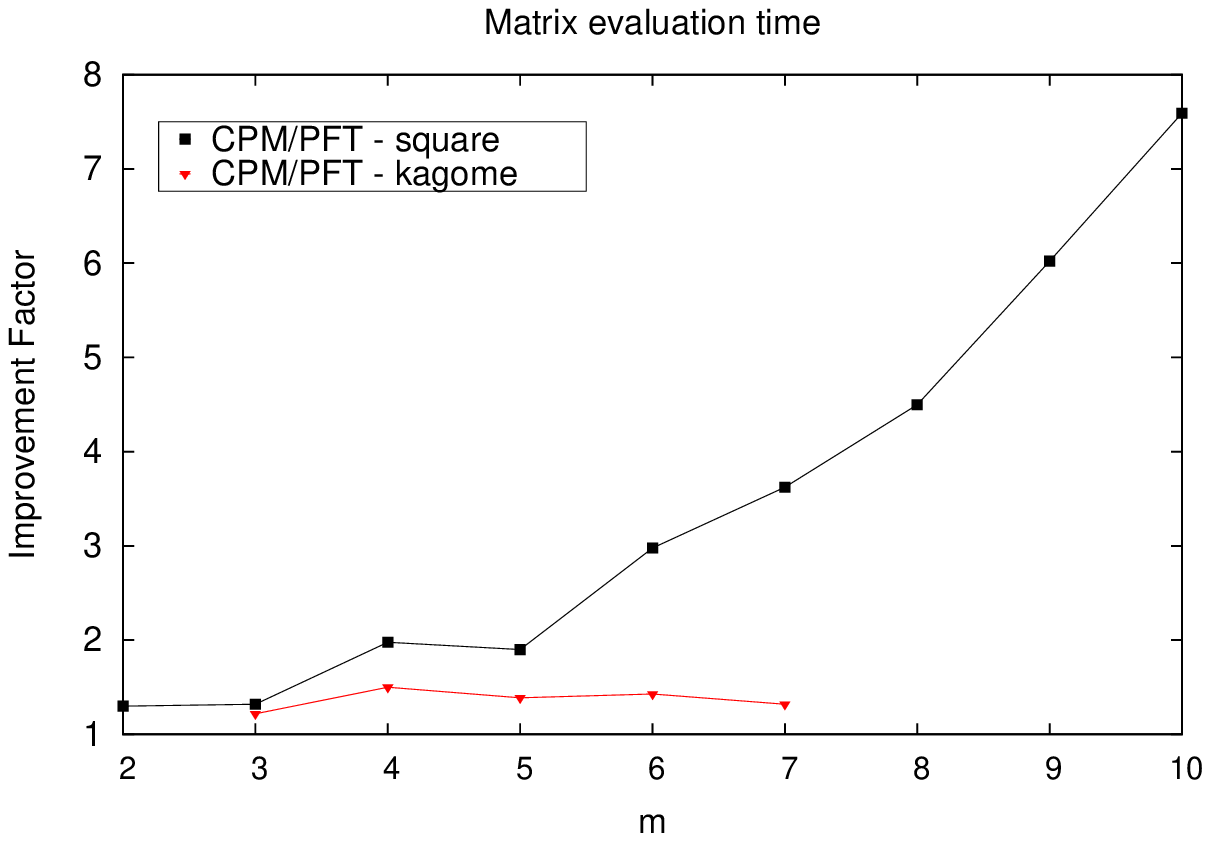}\\
\end{tabular}
\caption{Comparison between \textit{Parallel Family Trees (PFT)} and the \textit{Catalan Parallel Method (CPM)}.}
\label{fig_compare_results}
\end{figure*}

The first aspect to note from the running time results is that there is an non-linear improvement with respect to CPM that 
is independent of the amount of processors used. This improvement 
corresponds to the asymptotic reduction from $O(4^m)$ to $O(3^m)$ in configuration space. The 
improvement is less clear in the kagome strip test, but we expect that it should manifest when 
exploring larger sizes of $m$ or when using better heuristics for the DC optimizations. 
For the space metric, we observe that the size of the compressed matrices is indeed smaller 
than in the CPM case. Moreover, for the square strip the amount of compression 
increases non-linearly as we expected from the theoretical bound. For the kagome test, the 
compression factor stabilizes at approximately $1.5$. We believe that the reason 
why kagome compression stays fixed is because the kagome matrix is more sparse than in 
the square case, making the method to group zero-elements instead of large polynomials, 
reducing the compression factor from the maximum possible if the matrix was dense.
For the results of Matrix evaluation, we observe that 
evaluation and decompression on a PFT-matrix is faster than just evaluation on a CPM-matrix. 
The improvement seems to be a consequence of the compression factor achieved previously, 
since the behavior is similar. 

\subsection{Dynamic scheduler and block size}
The role of the block size under dynamic scheduling can be viewed as the amount of \textit{staticness} induced to the program. A value of $B=1$ means a fully dynamic scheduler, 
while a value of $B = \lceil n/p \rceil$ means a fully static scheduler. Given that the dynamic scheduler of our implementation communicates via 1-byte messages, it is safe to use 
$B$ as long as the network is sufficiently fast and dedicated to the cluster, like in our case. In a limited and shared network environment, one could consider exploring the range 
$1 < B < \lceil n/p \rceil$ until a good local minimum is found.

\subsection{Axial Symmetry}
When using axial symmetry, we observed an extra improvement in performance of up to $2X$ for the largest values 
of $m$. This improvement applies to both sequential and parallel execution. The size of the transfer matrix is improved under axial symmetry, in the best cases 
we achieved almost half the dimension of the original matrix, which in practice translates into up to $1/4$ of the space of the original non-symmetric matrix. 
Lattices as the kagome will only have certain values of $m$ where it is axial symmetric. In the other cases, one must perform a non-symmetric computation.

\section{Validation}
\label{sec_validation}
In this section we present some physical results we have computed for different widths of the 
square strip using free boundary conditions, as a way to validate the correctness of the 
\textit{parallel family trees} method by comparing the curves with the ones from related works. 

The first set of results are shown in Figure \ref{fig_fixv}. In the graphics we present 
the limiting curves on the complex $q$-plane for different values of the 
temperature-like parameter; $v = \{-1.0, -0.5, -0.1\}$, at different strip widths in the 
range $m \in [2,8]$.
The curves were obtained by using the \textit{direct-search approach} method which consists of 
scanning the complex domain in small discrete steps, and checking on each discrete location
the condition $|\lambda_1| = |\lambda_2|$ where 
$\lambda_1$ and $\lambda_2$ are the first and second dominant eigenvalues, respectively.  
If the condition is true, then the pair $(x, y)$ is a point of the curve, 
where $x$ and $y$ are the real and imaginary parts of $q$, respectively. Due to numerical 
precision limits, we allowed $\%1$ of numerical error for accepting 
the condition $|\lambda_1| = |\lambda_2|$. For the case of $v=0.5$ we allowed up to $\%4$ of error 
for drawing the limiting curve at size $m=8$.
\begin{figure*}[ht!]
\centering
\begin{tabular}{ccc}
\includegraphics[scale=0.775]{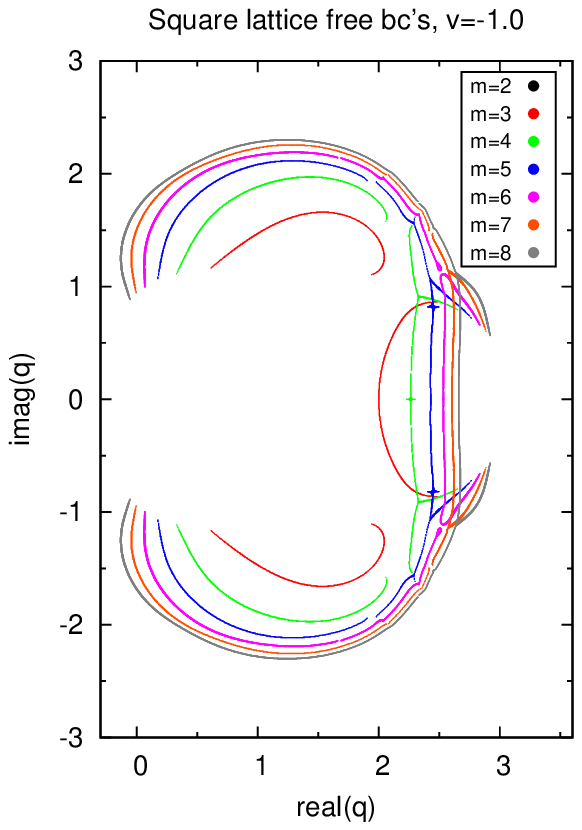} &
\includegraphics[scale=0.775]{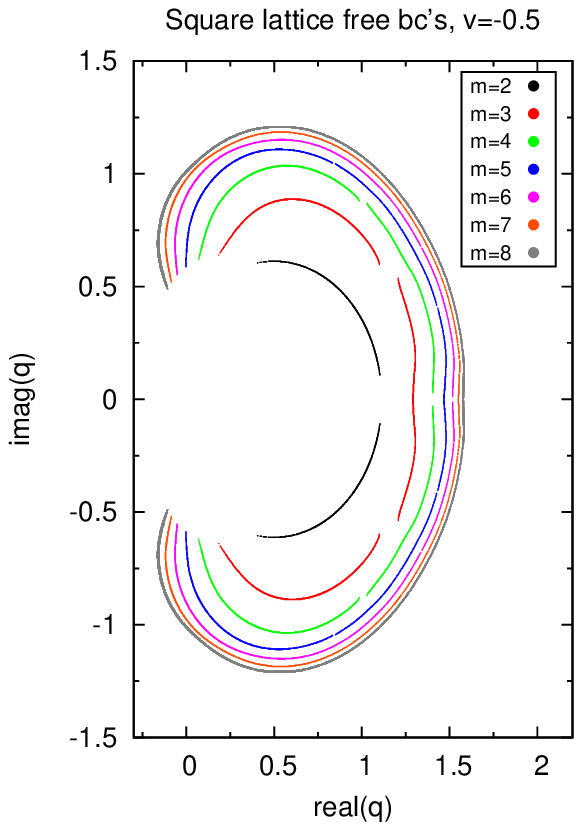} &
\includegraphics[scale=0.775]{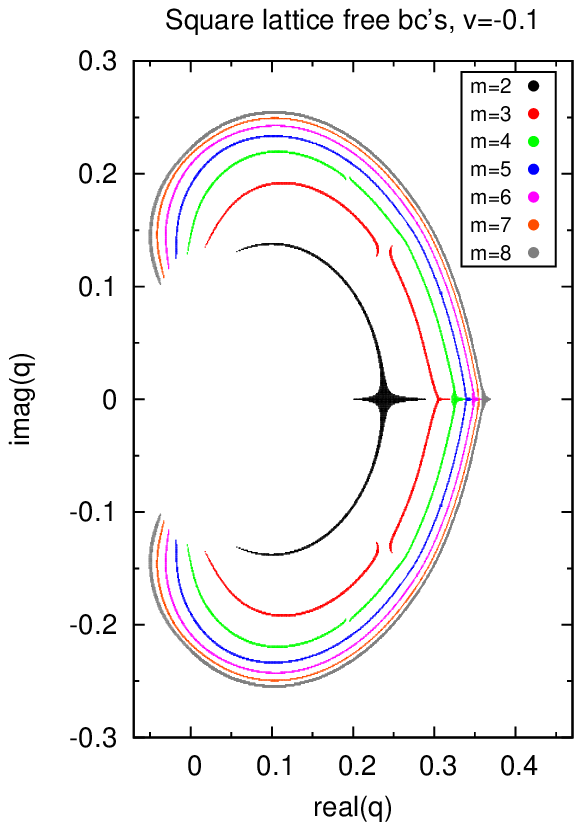}
\end{tabular}
\caption{Limiting curves on the complex $q-$plane for $v = \{-1.0, -0.5, -0.1\}$. In each graphic there are seven 
		 limiting curves with different colors, each one corresponding to a different strip width.}
\label{fig_fixv}
\end{figure*}
The curves for $v=-1$ agree with the ones presented by Salas \textit{et. al.} in Figure 21 of ref. \cite{salas_sokal_1}. 
The curves for $v=-0.5$ and $v=-0.1$, although grouped in a different way, 
agree with the result obtained by Chang \textit{et. al.} from Figures 2, 3, 4 of ref. \cite{salas2002}.
Limiting curves for $6 \le m \le 8$ did not appear in the cited work.

For the next set of physical results we are interested in fixing the $q$ parameter at values 
$q = \{2, 3, 4\}$ and compute the dimensionless reduced internal energy $E_r$ as well as the 
reduced function $C_H$ of the specific heat $C$, for different strip widths in the 
range $m \in [2,8]$. The dimensionless reduced internal energy is defined as
\begin{equation}
E_r = -\frac{E}{J} = (v+1)\frac{\partial f}{\partial v}
\end{equation}
where $f$ is the free energy density as defined in equation (\ref{eq_freeenergy_density}), 
$J$ the coupling constant which is $J > 0$ for the \textit{ferromagnetic} case ($0 < v < \infty$) 
and $J < 0$ for the \textit{antiferromagnetic} case ($-1 < v < 0$). The specific heat is defined as
\begin{equation}
C = \frac{\partial E}{\partial T} = k_B K^2 (v+1) \Bigg[\frac{\partial f}{\partial v} + (v+1)\frac{\partial^2 f}{\partial v^2}\Bigg]
\end{equation}
and $C_H$ uses the reduced form
\begin{equation}
C_H = \frac{C}{k_B K}
\end{equation}
The results are presented in Figure \ref{fig_energyspecheat}, where each row presents the results for a 
given $q$ value. The curves for $2 \le m \le 5$ agree with the ones presented 
by Chang \textit{et. al.} \cite{salas2002}. Results for $6 \le m \le 8$ did not appear in the cited work.
\begin{figure*}[ht!]
\centering
\begin{tabular}{cc}
\includegraphics[scale=0.75]{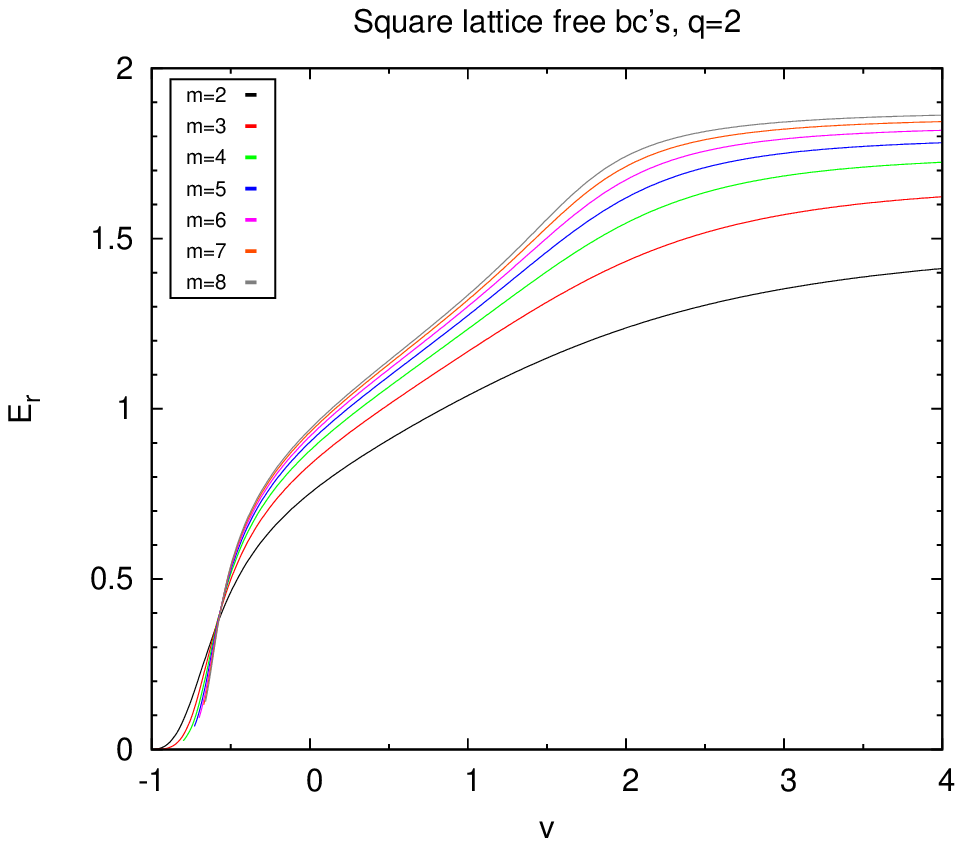} &
\includegraphics[scale=0.75]{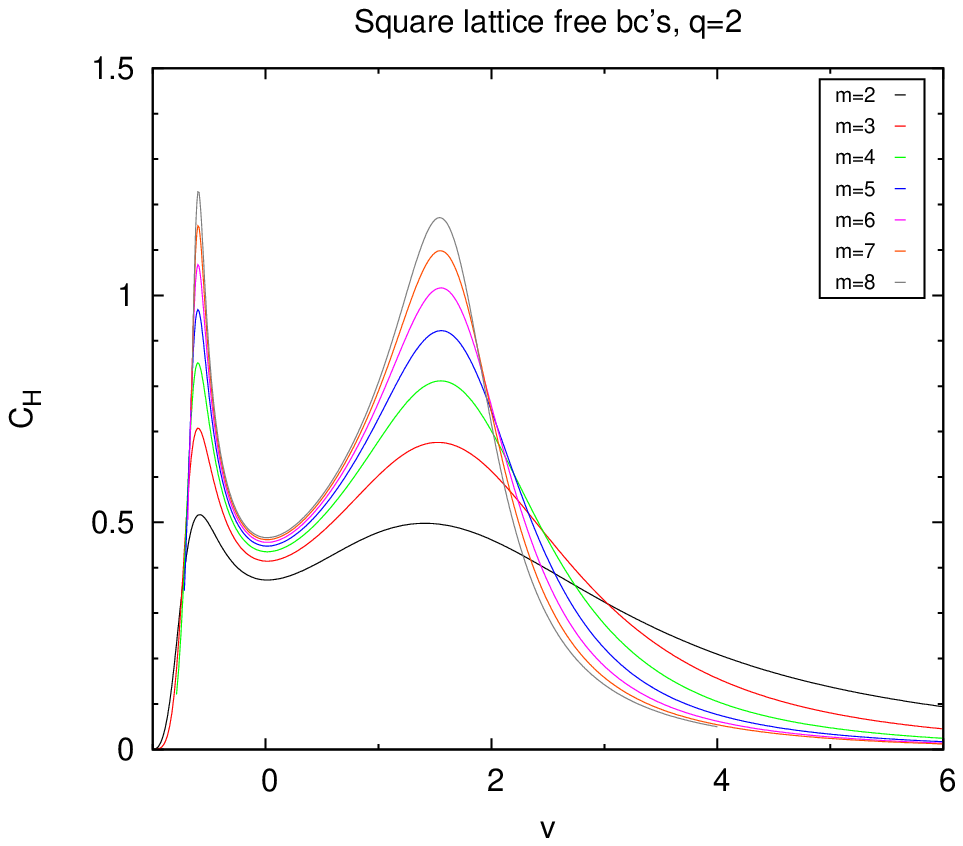}\\
\includegraphics[scale=0.75]{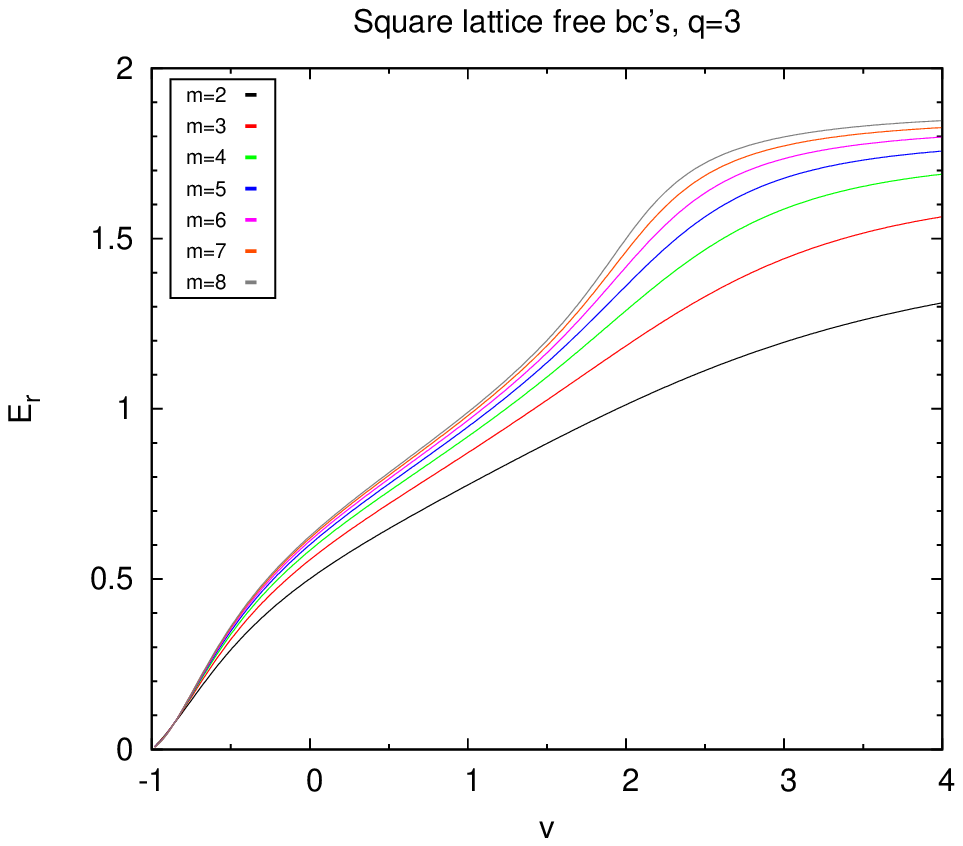} &
\includegraphics[scale=0.75]{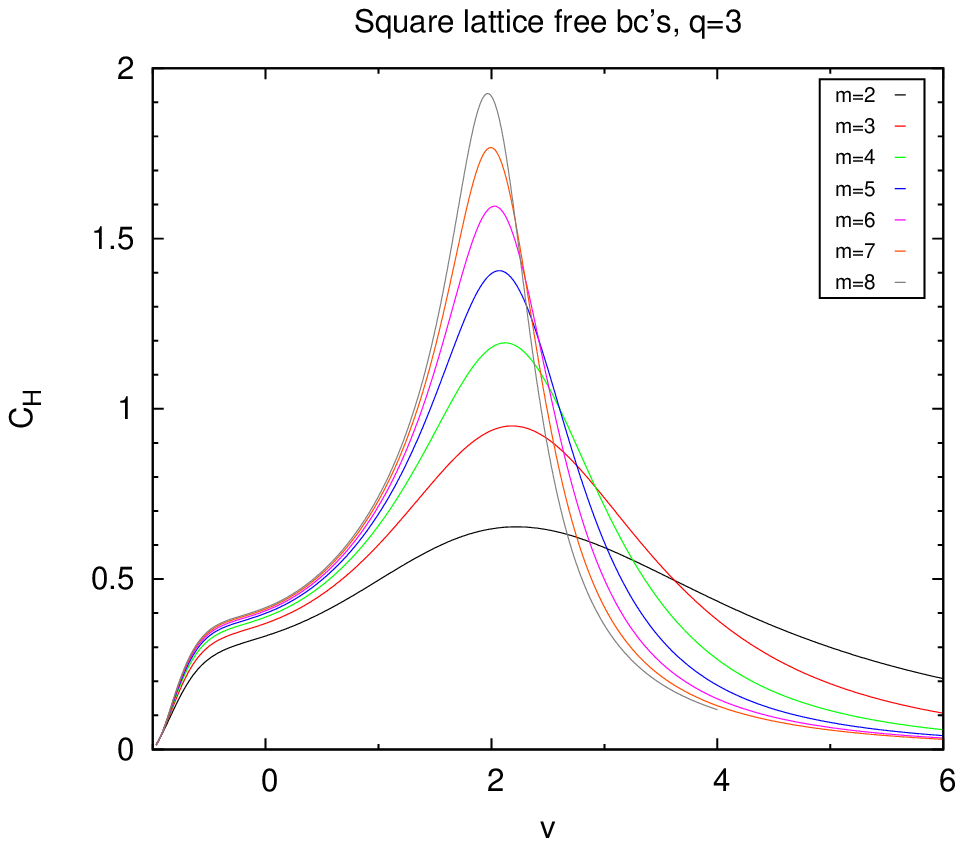}\\
\includegraphics[scale=0.75]{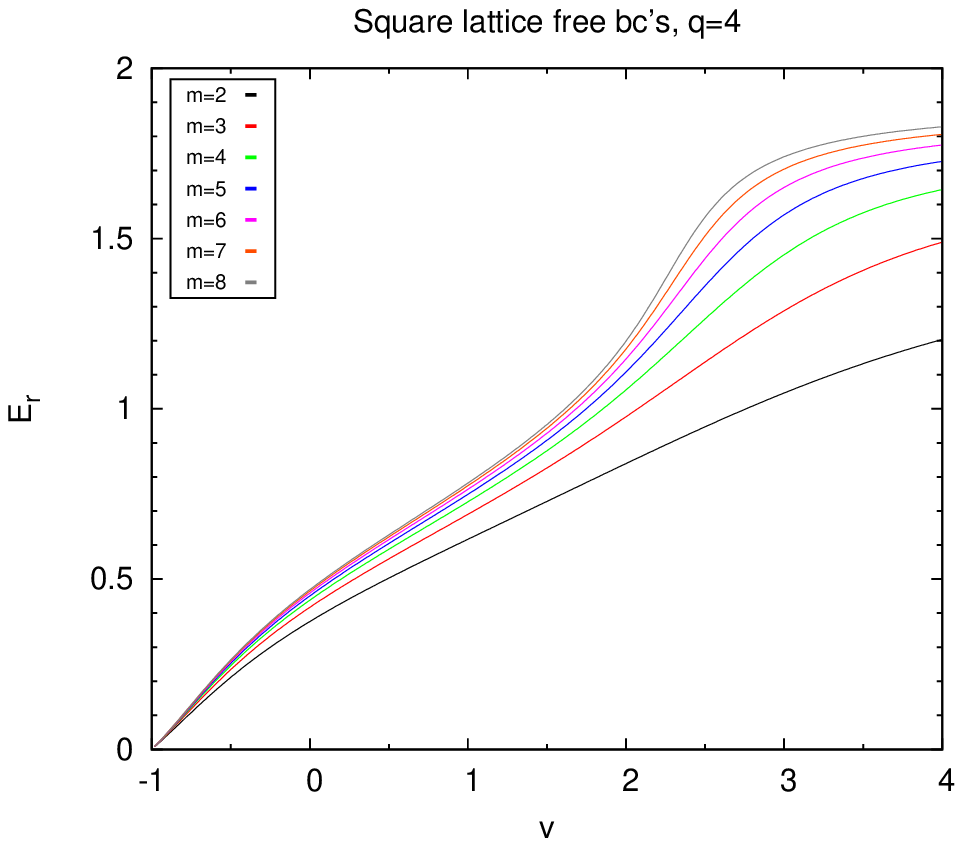} &
\includegraphics[scale=0.75]{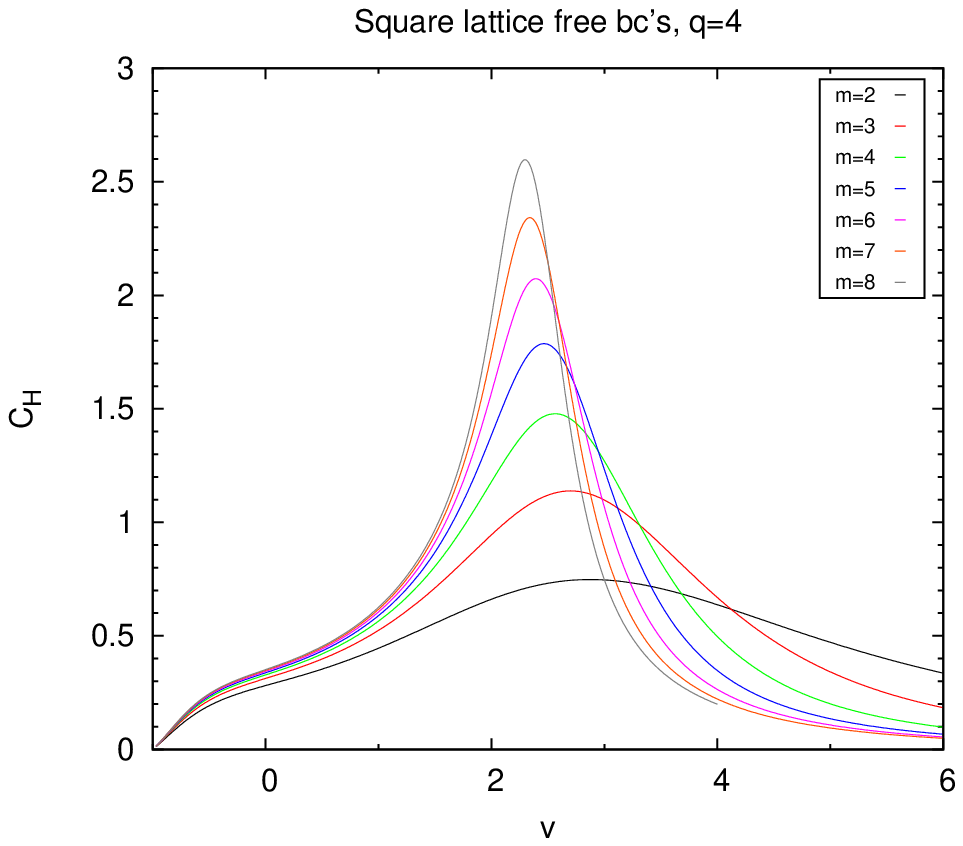}\\
\end{tabular}
\caption{Plots for reduced internal energy $E_r$ and reduced specific heat $C_H$ for 
		$q = \{2, 3, 4\}$.}
\label{fig_energyspecheat}
\end{figure*}

Although the computation of new physical curves 
for wider strips is indeed possible, it would require more time with our resources, 
or a much larger cluster than ours for faster results. 
Nevertheless, our present results already show that with the \textit{PFT} strategy known results 
are obtained faster than with CPM.
We would like to remind the reader that the focus of this work is on the algorithmic 
improvements and the possibilities to compute the general $(q,v)$ transfer matrix for strips, 
using a configuration space that is asymptotically $O(3^m)$ . 
\section{Discussion}
\label{seq_conclusions}
We have presented a parallel strategy for computing the general $(q,v)$ transfer matrix of strip lattices in the Potts model. Our main result is the asymptotic reduction 
of the configuration space, from $O(4^m)$ to $O(3^m)$, by re-organizing the problem domain as 
\textit{parallel family trees (PFT)}. Using this strategy, 
the transfer matrix can now be computed by just processing the root configurations, 
which are $O(3^m)$ in number. 
Computation of the family trees can be performed completely in parallel because family 
trees are independent from each other, 
and the configuration space is generated \textit{a priori}, removing any potential 
time-dependence. We have compared the experimental results of PFT and indeed 
it runs exponentially 
faster than the \textit{Catalan Parallel Method (CPM)} \cite{DBLP:conf/hpcc/NavarroHC13}, both in 
sequential and parallel execution. 

The resulting matrix of PFT is a compressed structure based 
on systems of linear equations. Numerical evaluation on the 
matrix, including decompression time, is actually faster than numerical evaluation using the CPM 
method, by a factor that is proportional to the improvement we measured for running time. 
Therefore, it is not only faster to generate the matrix using PFT, but it is also faster 
to use it later for extracting the physical information.

Multi-core results have shown that PFT benefits from shared-memory parallelism, achieving a maximum of $5.7$X of speedup for the square strip test when using $p=8$ processors. 
At $p=4$, the efficiency of the implementation is still over $95\%$, which is worth mentioning. By plotting the \textit{knee} curve, we have managed to confirm that choosing $p=4$ 
is in fact a wise decision for a balance of speed and efficiency. In the Multi-core scenario, a dynamic scheduler did not produce a beneficial change in performance, therefore static scheduling still remains convenient.

For the cluster results, we achieved up to $28X$ of speedup using $p=32$ for the square strip tests, with an efficiency above $90\%$ for a strip of width $m=10$ (largest one). 
For the kagome strip test, efficiency stayed above $55\%$ for a strip of $m \ge 7$ and the maximum value of speedup reached was close to $20X$ when using $p=31$. 
A small \textit{super-linear speedup} region emerged near $p=4$ when solving square strips of sizes $m=6,8$, giving an efficiency of up to $120\%$. 
We believe that this is just a particular fortunate event, possibly produced by the reduction of cache misses, which is caused when partitioned data fits entirely in cache. 
In general, we do not expect \textit{super-linear} behavior since we are measuring \textit{fixed-size speedup} which is upper bounded as $S_p \le p$ \cite{gustafson1990fixed}.   
The knee curve suggests that $p \in [8,10]$ produces a good balance between speed and efficiency. An important result in cluster execution is that 
dynamic scheduling is mandatory in order to achieve a performance curve that will not fall into \textit{performance valleys}, as static scheduling did. On average, dynamic 
scheduling achieves considerable higher performance than static scheduling. 

One of the goals of this work was to present an algorithmic improvement that is implicitly parallel and scalable. 
For this, we introduced a preprocessing step that generates all possible \textit{root configurations} and \textit{Catalan configurations}, which are critical for processing the 
family trees in parallel. This step takes a small amount of time compared to the whole problem. 
Other technical improvements had been introduced, some of them being already known in the literature \cite{haggard_computing_tutte_polynomials}; 
(1) fast computation of serial and parallel paths of the graph, (2) exploiting axial symmetry, (3) a set of algebra rules for making consistent keys in all leaf nodes and 
(4) a hash table for accessing column values of the transfer matrix. In particular, when taking advantage of axial symmetry, the implementation achieved extra improvement of up to $2X$ in performance, 
using almost a quarter of the matrix space used in a non-symmetric computation.

In order to achieve a scalable parallel implementation, some small data structures were replicated among 
processors while some other data structures per processor were created within the corresponding worker process context, not in any master process. 
This allocation strategy results in faster cache performance and brings up the possibility to scale better under NUMA architectures.
It is not a problem to store the matrix fragmented into many files as long as the matrix is in its symbolic form. In practice, it is first necessary to evaluate the matrix 
on $q$ and $v$ before doing any further numerical analysis. Therefore, the fragmented parts can be evaluated at runtime as they become read. This evaluation can also be done in parallel.

The only technical restriction of the \textit{parallel family trees} strategy in order to work 
is that vertices of the left and right boundaries of the layer need to be connected sequentially. 
This restriction is not a problem, because any planar strip lattice can be rotated 
so that the restriction 
is satisfied. Additionally, PFT allows any graph structure along the vertical direction, 
that is, one can study strips where its $K_i$ layer is composed by a sequence of different 
tiles. 

In the kagome tests, the performance results were not as good as we expected, 
because the number of edges in the layer is much higher than in the square case, making DC to take a 
considerable amount of time for each configuration. 
We believe that the dependence of DC on the number of edges in the layer is a sensible aspect for 
the PFT algorithm, and an extrapolation of this situation would suggest that 
the largest Archimedean lattices could be much harder to the point of being intractable. 
However, it is important to consider that DC can significantly improve its performance 
if the heuristics are improved so that they choose the best sequence of edges 
based on the connectivity of the graph layer \cite{haggard_computing_tutte_polynomials}. 
These heuristics, combined with the linear-cost optimizations, can make the PFT method 
more resistant to the number of edges in the layer. Furthermore, if more processors are used to the 
point that $p = \Delta_m$, then the time for computing the TM will be much lower than in our case 
with $p=32$, and will correspond to the time taken to solve the deepest DAG of computation. 
For this reason, we expect that an execution on a large cluster or supercomputer could allow the computation of 
transfer matrices of strips wider than what has been reached before.

\section*{Acknowledgment}
Special thanks to Pedro D. \'Alvarez for his explanations and useful advice on the computation of the limiting curves.
The authors would like to thank \textit{CONICYT} for sponsoring the PhD program of Crist\'obal A. Navarro, folio $N^o$ 21100750. 
This work was partially supported by the FONDECYT projects $N^o$ 1120495, $N^o$ 1120352 and the \textit{Millennium Nucleus Information and Coordination in Networks ICM/FIC P10-024F}. 

\bibliographystyle{elsarticle-num}
\bibliography{tmfamily}

\begin{thebibliography}{10}
\expandafter\ifx\csname url\endcsname\relax
  \def\url#1{\texttt{#1}}\fi
\expandafter\ifx\csname urlprefix\endcsname\relax\def\urlprefix{URL }\fi
\expandafter\ifx\csname href\endcsname\relax
  \def\href#1#2{#2} \def\path#1{#1}\fi

\bibitem{potts}
R.~B. Potts, Some generalized order-disorder transformation, in:
  Transformations, Proceedings of the Cambridge Philosophical Society, Vol.~48,
  1952, pp. 106--109.

\bibitem{PhysRevLett.43.799}
H.~W.~J. Bl\"ote, R.~H. Swendsen, First-order phase transitions and the
  three-state {P}otts model, Phys. Rev. Lett. 43 (1979) 799--802.

\bibitem{Chang_Shrock_2000}
S.-C. Chang, R.~Shrock, Exact {P}otts model partition functions on strips of
  the honeycomb lattice, Physica A: Statistical Mechanics and its Applications
  296~(1-2) (2000) 48.

\bibitem{Shrock_Tsai_1999}
R.~Shrock, S.-H. Tsai, Exact partition functions for {P}otts antiferromagnets
  on cyclic lattice strips, Physica A 275 (1999) 27.

\bibitem{Chang_Salas_Shrock_2002}
S.-C. Chang, J.~Salas, R.~Shrock, Exact {P}otts model partition functions on
  wider arbitrary-length strips of the square lattice, Journal of Statistical
  Physics 107~(5/6) (2002) 1207--1253.

\bibitem{Chang_Jacobsen_Salas_Shrock_2002}
S.-C. Chang, J.~L. Jacobsen, J.~Salas, R.~Shrock, Exact {P}otts model partition
  functions for strips of the triangular lattice, Physica A 286~(1-2) (2002)
  59.

\bibitem{ising_1925}
E.~Ising, Beitrag zur theorie des ferromagnetismus, Zeitschrift Für Physik
  31~(1) (1925) 253--258.

\bibitem{NYAS:NYAS627}
L.~Onsager, The effects of shape on the interaction of colloidal particles,
  Annals of the New York Academy of Sciences 51~(4) (1949) 627--659.

\bibitem{Woeginger:2003:EAN:885909.885927}
G.~J. Woeginger, Combinatorial optimization - eureka, you shrink!,
  Springer-Verlag New York, Inc., New York, NY, USA, 2003, Ch. Exact algorithms
  for NP-hard problems: a survey, pp. 185--207.

\bibitem{salas_sokal_1}
J.~Salas, A.~Sokal, Transfer matrices and partition-function zeros for
  antiferromagnetic {P}otts models. {I}. {G}eneral theory and square-lattice
  chromatic polynomial, Journal of Statistical Physics 104~(3-4) (2001)
  609--699.

\bibitem{1751-8121-43-31-315002}
J.~L. Jacobsen, Bulk, surface and corner free-energy series for the chromatic
  polynomial on the square and triangular lattices, Journal of Physics A:
  Mathematical and Theoretical 43~(31) (2010) 315002.

\bibitem{chang2009structure}
S.-C. Chang, R.~Shrock, Structure of the partition function and transfer
  matrices for the {P}otts model in a magnetic field on lattice strips, Journal
  of Statistical Physics 137~(4) (2009) 667--699.

\bibitem{salas_sokal_2011}
J.~Salas, A.~Sokal, Transfer matrices and partition-function zeros for
  antiferromagnetic {P}otts models {VI}. square lattice with extra-vertex
  boundary conditions, Journal of Statistical Physics 144~(5) (2011)
  1028--1122.

\bibitem{MGhaemi}
M.~Ghaemi, G.~A. Parsafar, Size reduction of the transfer matrix of
  two-dimensional {I}sing and {P}otts models, 2 4.

\bibitem{bedini}
A.~Bedini, J.~L. Jacobsen, A tree-decomposed transfer matrix for computing
  exact {P}otts model partition functions for arbitrary graphs, with
  applications to planar graph colourings, Journal of Physics A: Mathematical
  and Theoretical 43~(38) (2010) 385001.

\bibitem{blake2009survey}
G.~Blake, R.~G. Dreslinski, T.~Mudge, A survey of multicore processors, Signal
  Processing Magazine, IEEE 26~(6) (2009) 26--37.

\bibitem{duncan1990survey}
R.~Duncan, A survey of parallel computer architectures, Computer 23~(2) (1990)
  5--16.

\bibitem{navhitmat2014}
C.~A. Navarro, N.~Hitschfeld-Kahler, L.~Mateu, A survey on parallel computing
  and its applications in data-parallel problems using {GPU} architectures,
  Commun. Comput. Phys. 15 (2014) 285--329.

\bibitem{DBLP:conf/hpcc/NavarroHC13}
C.~A. Navarro, N.~Hitschfeld, F.~Canfora, Multi-core computation of transfer
  matrices for strip lattices in the potts model, in: 15th {IEEE} International
  Conference on High Performance Computing and Communications {\&} 2013 {IEEE}
  International Conference on Embedded and Ubiquitous Computing, {HPCC/EUC}
  2013, Zhangjiajie, China, November 13-15, 2013, 2013, pp. 125--134.

\bibitem{Wilf:2002:AC:560438}
H.~S. Wilf, Algorithms and Complexity, 2nd Edition, A. K. Peters, Ltd., Natick,
  MA, USA, 2002.

\bibitem{tutte}
W.~T. Tutte, A contribution to the theory of chromatic polynomials, J. Math 6
  (1954) 80--91.

\bibitem{potts_tutte_relation}
D.~Welsh, C.~Merino, The {P}otts model and the tutte polynomial, J. Math. Phys.
  43 (2000) 1127--1149.

\bibitem{sokal_2005}
A.~D. Sokal, The multivariate tutte polynomial (alias {P}otts model) for graphs
  and matroids, Surveys in Combinatorics 327 (2005) 173--226.

\bibitem{derrida1980transfer}
B.~Derrida, J.~Vannimenus, Transfer-matrix approach to percolation and
  phenomenological renormalization, Journal de Physique Lettres 41~(20) (1980)
  473--476.

\bibitem{baxter1982exactly}
R.~Baxter, Exactly solved models in statistical mechanics, Academic Press,
  1982.

\bibitem{Jacobsen2001701}
J.~Jacobsen, J.~Salas, Transfer matrices and partition-function zeros for
  antiferromagnetic {P}otts models. {II}. extended results for square-lattice
  chromatic polynomial, Journal of Statistical Physics 104~(3-4) (2001)
  701--723.

\bibitem{jacobsen2003}
J.~Jacobsen, J.~Salas, A.~Sokal, Transfer matrices and partition-function zeros
  for antiferromagnetic {P}otts models. {III}. triangular-lattice chromatic
  polynomial, Journal of Statistical Physics 112~(5-6) (2003) 921--1017.

\bibitem{jacobsen2006}
J.~Jacobsen, J.~Salas, Transfer matrices and partition-function zeros for
  antiferromagnetic {P}otts models : {IV}. chromatic polynomial with cyclic
  boundary conditions, Journal of Statistical Physics 122~(4) (2006) 705--760.

\bibitem{1179.82040}
J.~Salas, A.~D. Sokal, {Transfer matrices and partition-function zeros for
  antiferromagnetic {P}otts models. V. Further results for the square-lattice
  chromatic polynomial.}, J. Stat. Phys. 135~(2) (2009) 279--373.

\bibitem{jacobsen2007}
J.~Jacobsen, J.~Salas, Phase diagram of the chromatic polynomial on a torus,
  Nuclear Physics B 783~(3) (2007) 238--296.

\bibitem{Alvarez_Canfora_Reyes_Riquelme_2012}
P.~Alvarez, F.~Canfora, S.~Reyes, S.~Riquelme, {P}otts model on recursive
  lattices: some new exact results, The European Physical Journal B 85~(3)
  (2012) 1--13.

\bibitem{hartmann-2005-94}
A.~K. Hartmann, Partition function of two- and three-dimensional {P}otts
  ferromagnets for arbitrary values of q$>$0, Phys.rev.lett. 94 (2005) 050601.

\bibitem{Shrock_2000}
R.~Shrock, Exact {P}otts model partition functions on ladder graphs, Physica A:
  Statistical Mechanics and its Applications 283~(3-4) (2000) 73.

\bibitem{haggard_computing_tutte_polynomials}
G.~Haggard, D.~J. Pearce, G.~Royle, Computing tutte polynomials, ACM Trans.
  Math. Softw. 37 (2010) 24:1--24:17.

\bibitem{DBLP:conf/focs/BjorklundHKK08}
A.~Bj{\"{o}}rklund, T.~Husfeldt, P.~Kaski, M.~Koivisto, Computing the tutte
  polynomial in vertex-exponential time, in: 49th Annual {IEEE} Symposium on
  Foundations of Computer Science, {FOCS} 2008, October 25-28, 2008,
  Philadelphia, PA, {USA}, 2008, pp. 677--686.

\bibitem{halverson2005partition}
T.~Halverson, A.~Ram, Partition algebras, European Journal of Combinatorics
  26~(6) (2005) 869--921.

\bibitem{Dutton:1986:CEB:10987.10992}
R.~D. Dutton, R.~C. Brigham, Computationally efficient bounds for the catalan
  numbers, Eur. J. Comb. 7~(3) (1986) 211--213.

\bibitem{salas2002}
S.-C. Chang, J.~Salas, R.~Shrock, Exact {P}otts model partition functions for
  strips of the square lattice, Journal of Statistical Physics 107~(5-6) (2002)
  1207--1253.

\bibitem{coxeter1973regular}
H.~S.~M. Coxeter, Regular polytopes, Courier Dover Publications, 1973.

\bibitem{Henk:1997:BPC:285869.285884}
M.~Henk, J.~Richter-Gebert, G.~M. Ziegler, Handbook of discrete and
  computational geometry, CRC Press, Inc., Boca Raton, FL, USA, 1997, Ch. Basic
  properties of convex polytopes, pp. 243--270.

\bibitem{fo_95}
I.~Foster, Designing and building parallel programs: Concepts and tools for
  parallel software engineering, Addison-Wesley Longman Publishing Co., Inc.,
  Boston, MA, USA, 1995.

\bibitem{Chapman:2007:UOP:1370966}
B.~Chapman, G.~Jost, R.~V.~D. Pas, Using OpenMP: Portable Shared Memory
  Parallel Programming (Scientific and Engineering Computation), The MIT Press,
  2007.

\bibitem{mpi}
M.~P. Forum, Mpi: A message-passing interface standard, Tech. rep., Knoxville,
  TN, USA (1994).

\bibitem{Bauer20021}
C.~Bauer, A.~Frink, R.~Kreckel, Introduction to the ginac framework for
  symbolic computation within the c++ programming language, Journal of Symbolic
  Computation 33~(1) (2002) 1 -- 12.

\bibitem{1751-8121-47-13-135001}
J.~L. Jacobsen,
  \href{http://stacks.iop.org/1751-8121/47/i=13/a=135001}{High-precision
  percolation thresholds and {P}otts-model critical manifolds from graph
  polynomials}, Journal of Physics A: Mathematical and Theoretical 47~(13)
  (2014) 135001.
\newline\urlprefix\url{http://stacks.iop.org/1751-8121/47/i=13/a=135001}

\bibitem{DBLP:journals/tc/EagerZL89}
D.~L. Eager, J.~Zahorjan, E.~D. Lazowska, Speedup versus efficiency in parallel
  systems, IEEE Trans. Computers 38~(3) (1989) 408--423.

\bibitem{wilkinson1999parallel}
B.~Wilkinson, C.~M. Allen, Parallel programming, page 7, Prentice hall New
  Jersey, 1999.

\bibitem{gustafson1990fixed}
J.~L. Gustafson, Fixed time, tiered memory, and superlinear speedup, in:
  Proceedings of the Fifth Distributed Memory Computing Conference (DMCC5),
  1990, pp. 1255--1260.

\end{thebibliography}
\end{document}